\documentclass[10pt,conference]{IEEEtran}
\IEEEoverridecommandlockouts

\usepackage{amsmath,amssymb,amsfonts}
\usepackage{textcomp}
\usepackage[absolute]{textpos}

\usepackage{multirow}
\usepackage{graphicx}
\usepackage{pdfpages}
\usepackage{ulem}
\usepackage{hyperref}
\usepackage{booktabs} 
\usepackage{siunitx} 
\usepackage{etoolbox}

\usepackage[switch]{lineno} 

\usepackage[linesnumbered,titlenumbered,ruled,vlined,resetcount,algosection]{algorithm2e}
\usepackage{xcolor}
\usepackage{color}
\def\BibTeX{{\rm B\kern-.05em{\sc i\kern-.025em b}\kern-.08em
    T\kern-.1667em\lower.7ex\hbox{E}\kern-.125emX}}

\usepackage{amsthm}
\newtheorem{theorem}{Theorem}
\newtheorem{corollary}{Corollary}

\usepackage{hhline}
\usepackage{colortbl}
\usepackage{threeparttable}
\usepackage{subfig}
\usepackage{comment}
\usepackage{enumitem}
\usepackage{tikz}
\newcommand*\circled[1]{\tikz[baseline=(char.base)]{
            \node[shape=circle,fill,inner sep=1.3pt] (char) {\textcolor{white}{#1}};}}

\def\BibTeX{{\rm B\kern-.05em{\sc i\kern-.025em b}\kern-.08em
    T\kern-.1667em\lower.7ex\hbox{E}\kern-.125emX}}

\setlength{\floatsep}{1em}
\setlength{\textfloatsep}{0.8em}
\setlength{\intextsep}{0.1em}
\setlength{\belowcaptionskip}{0.2em}



\begin{document}

\Urlmuskip=0mu plus 1mu


\author{\IEEEauthorblockN{
Jiajun Huang,\IEEEauthorrefmark{1}
Sheng Di,\IEEEauthorrefmark{2}
Xiaodong Yu,\IEEEauthorrefmark{3}
Yujia Zhai,\IEEEauthorrefmark{1}
Zhaorui Zhang,\IEEEauthorrefmark{4}
Jinyang Liu,\IEEEauthorrefmark{1}
Xiaoyi Lu,\IEEEauthorrefmark{5}\\
Ken Raffenetti\IEEEauthorrefmark{2}
Hui Zhou\IEEEauthorrefmark{2}
Kai Zhao\IEEEauthorrefmark{6}
Zizhong Chen\IEEEauthorrefmark{1}
Franck Cappello\IEEEauthorrefmark{2}
Yanfei Guo\IEEEauthorrefmark{2}
Rajeev Thakur\IEEEauthorrefmark{2}
}
\IEEEauthorblockA{\IEEEauthorrefmark{1}University of California, Riverside}
\IEEEauthorblockA{\IEEEauthorrefmark{2}Argonne National Laboratory}
\IEEEauthorblockA{\IEEEauthorrefmark{3}Stevens Institute of Technology}
\IEEEauthorblockA{\IEEEauthorrefmark{4}The Hong Kong Polytechnic University}
\IEEEauthorblockA{\IEEEauthorrefmark{5}University of California, Merced}
\IEEEauthorblockA{\IEEEauthorrefmark{6}Florida State University}
\\
\\
\\
}

\title{An Optimized Error-controlled MPI Collective Framework Integrated with Lossy Compression}

\makeatletter
\patchcmd{\@maketitle}
  {\addvspace{0.5\baselineskip}\egroup}
  {\addvspace{-3\baselineskip}\egroup}
  {}
  {}
\makeatother

\maketitle
\thispagestyle{plain}
\pagestyle{plain}


\begin{abstract}

With the ever-increasing computing power of supercomputers and the growing scale of scientific applications, the efficiency of MPI collective communications turns out to be a critical bottleneck in large-scale distributed and parallel processing. The large message size in MPI collectives is particularly concerning because it can significantly degrade the overall parallel performance. To address this issue, prior research simply applies the off-the-shelf fix-rate lossy compressors in the MPI collectives, leading to suboptimal performance, limited generalizability, and unbounded errors. In this paper, we propose a novel solution, called \textit{C-Coll}, which leverages error-bounded lossy compression to significantly reduce the message size, resulting in a substantial reduction in communication cost. The key contributions are three-fold. (1) We develop two general, optimized lossy-compression-based frameworks for both types of MPI collectives (collective data movement as well as collective computation), based on their particular characteristics. Our framework not only reduces communication cost but also preserves data accuracy. (2) We customize SZx, an ultra-fast error-bounded lossy compressor, to meet the specific needs of collective communication. (3) We integrate \textit{C-Coll} into multiple collectives, such as MPI\_Allreduce, MPI\_Scatter, and MPI\_Bcast, and perform a comprehensive evaluation based on real-world scientific datasets. Experiments show that our solution outperforms the original MPI collectives as well as multiple baselines and related efforts by 1.8--2.7$\times$.

\end{abstract}

\begin{IEEEkeywords}
Lossy Compression, MPI Collective, Distributed Systems, Scientific Datasets
\end{IEEEkeywords}

\maketitle


\section{Introduction}
MPI collectives provide high-performance collective communications in distributed systems, making a significant impact on various research fields such as scientific applications, distributed machine learning, and others~\cite{awan2017s, wang2006pelegant, abadi2016tensorflow, ayala2019impacts, jain2019scaling, abdelmoniem2021efficient}. With the advent of exascale computing and deep learning applications, the demand for large-message MPI collectives has increased. For example, in image classification tasks, VGG19~\cite{simonyan2015very} and ResNet-50~\cite{he2016deep} have 143 million and 25 million parameters, respectively, with communication overheads of 83\% and 72\%~\cite{abdelmoniem2021efficient}. Therefore, optimizing MPI collectives for large messages has become essential~\cite{chunduri2018characterization, Bayatpour2018SALaR, patarasuk2009bandwidth}.

MPI collectives consist of both internode communication and intranode communication, and the former is often the major concern. The overall collective performance is usually limited by the efficiency of internode communications because of limited network bandwidth. Therefore, optimizing internode collective communication is critical to improving the overall performance of MPI collectives. This topic has been a focus of research for decades, with state-of-the-art algorithms achieving notable improvements. However, with the increasing demand for large-message MPI collectives, further optimization remains necessary~\cite{Alm05BlueGene, thakur2005optimization, patarasuk2009bandwidth}. Lossy compression \cite{Di2016SZ,Tao2017SZ,Zhao2020SZauto,Lindstrom2014ZFP} (rather than lossless compression \cite{Deutsch1996gzip, Gaillyzlib,Collet2015zstd}) is a promising solution to mitigate this MPI collective performance issue because of its ability to significantly reduce the message size.

Although lossy compression has been widely used to resolve many other scalability issues in high-performance computing, such as reducing memory footprint \cite{quant-compression}, reducing storage space \cite{nbody-compression,mdz}, and avoiding duplicated computation \cite{pastri}, only a few studies have explored its use in this direction, and all expose certain limitations. To elaborate, Zhou et al. \cite{Zhou2021GPUCOMPRESSION} proposed GPU-compression enhanced point-to-point communication by integrating MPC \cite{yang2015mpc} and 1D fixed-rate ZFP \cite{Lindstrom2014ZFP} into MVAPICH2 \cite{SHM-MVAPICH2}. Their approach, referred to as \textit{CPR-P2P}, simply involved compressing the messages before transmission and decompressing them after reception, leading to significant performance overhead due to the non-negligible time required for compression and decompression. Meanwhile, Zhou et al. \cite{Zhou2022GPUCOMPRESSIONALLTOALL, Zhou2023IPDPS, Zhou2022HiPC} proposed several additional approaches to improve multiple MPI collectives using 1D fixed-rate ZFP \cite{Lindstrom2014ZFP} on GPUs. Their methods, however, focus on fixed-rate compression\footnote{Fixed-rate compression means that the lossy compression would be performed based on a user-specified fixed compression ratio.}, introducing two major limitations: (1) compression errors cannot be bounded, leading to an uncontrolled accuracy, and (2) the compression quality is considerably lower compared to the fixed-accuracy mode\footnote{Fixed-accuracy, also known as error-bounded lossy compression, compresses data based on a user-specified error bound.} in ZFP, as demonstrated by prior research \cite{fraz}.

The aforementioned limitations of compression-enabled MPI collective algorithms motivate us to develop a new efficient MPI collective framework which leverages lossy compression technique to significantly improve the MPI collective performance. However, this brings in three direct technical challenges. \circled{A} Devising a general framework that can effectively hide the communication cost and choose an appropriate timing to call lossy compression is non-trivial.
\circled{B} The data loss nature of the lossy compression brings up a critical concern on the accuracy of collective operations. \circled{C} Existing lossy compressors are not designed for the collective context, leading to suboptimal collective performance because of unnecessary overheads when they are directly applied in MPI collectives \cite{Di2016SZ, Tao2017SZ, Liang2018SZ, Yu2022SZx, Lindstrom2014ZFP}.

Our developed framework is named as compression-facilitated MPI collective framework (\textit{C-Coll}), which can address the aforementioned limitations and challenges. To the best of our knowledge, this is the first-ever framework that provides a general high-performance solution for compression-integrated MPI collectives. Moreover, this is the first accuracy-aware design, which ensures the accelerated collective performance with error-bounded lossy compression does not compromise data quality. To be more specific, our contributions include:
\begin{itemize}
    \item To address challenge \circled{A}, we introduce two efficient frameworks, which can significantly accelerate both types of MPI collectives. Specifically, the first framework notably diminishes the compression overhead in the collective data movement operations (e.g., Scatter, Bcast and All-gather), thereby achieving substantial performance improvement. The second one hides communication inside of compression in collective computation (e.g., Reduce-scatter), which in turn enhances the performance of collective computation. Moreover, these two frameworks can be combined together to speed up more advanced MPI collective operations such as All-reduce. 
    \item To address challenge \circled{B}, we devise several strategies to effectively control the error propagation in the \textit{C-Coll} framework. On the one hand, we carefully select the most suitable error-bounded lossy compressor based on a comprehensive analysis of various state-of-the-art lossy compression methods, considering multiple aspects of their performance on MPI collectives. On the other hand, we carefully regulate the number of compression operations and also develop an efficient method to resolve the imbalance in collective communication introduced by the error-bounded lossy compression.
    \item To address challenge \circled{C}, we customize SZx to cater to the specific needs of collective communication. Specifically, we redesign the compression workflow of SZx and implement a pipelined version of SZx to overlap compression and communication. As a result, we significantly decrease the communication cost of our \textit{C-Coll} framework by up to 4.9$\times$.
    \item To prove the error-bounded nature of the \textit{C-Coll} framework, we perform an in-depth mathematical analysis to derive the limited impact of lossy compression on error propagation. Theoretical analysis demonstrates that the final error of collective data movement framework of \textit{C-Coll} can be well-bounded within the compression error bound set by the user. For collective computation framework, we can derive the final aggregated error is bounded well with a high probability. For example, if there are 100 nodes, the aggregated error is bounded in the range of $[-\frac{20}{3} \widehat{e}, \frac{20}{3} \widehat{e}]$ with a probability of $95.44\%$, where $\widehat{e}$ is the compression error bound.
    \item To demonstrate the generality of our design, we integrate \textit{C-Coll} into multiple collectives, such as MPI\_Allreduce, MPI\_Scatter, and MPI\_Bcast, and evaluate the performance carefully with various scientific datasets generated by real-world applications (such as Seismic Imaging \cite{Kayum2020RTM}, Hurricane Simulation \cite{hurricane2004}, and Climate Simulation \cite{cesm,cesm-code}). Experiments with 128 Xeon Broadwell compute nodes from a supercomputer show that other related efforts or baselines exhibit undesired performance degradation on MPI\_Allreduce because of the significant compression overhead. In comparison, our solution---the lossy-compression-based Allreduce (we call it C-Allreduce)---outperforms the original MPI\_Allreduce by 2.1$\times$. Moreover, our compression-based scatter and Bcast operations (namely C-Scatter and C-Bcast) achieve up to 1.8$\times$ and 2.7$\times$ performance improvements compared with the original MPI\_Scatter and MPI\_Bcast.
    We also use a real-world use-case (image stacking analysis) to validate the practical effectiveness of C-Allreduce, which shows up to 1.5$\times$ performance gain over MPI\_Allreduce, while preserving a high data integrity/accuracy during the collective operations.
\end{itemize}

The rest of the paper is organized as follows: we introduce background and related work in Section \ref{sec:background} and detail our design and optimization in Section \ref{sec-design_and_optimizations}. Evaluation results are presented in Section \ref{exp-setup-sec} followed by conclusion and future work in Section \ref{sec:conclusion}.

\section{Background and Related Work}
\label{sec:background}
In this section, we discuss the background and related work. We first introduce MPI collective communication, followed by a discussion on high-speed lossy compressors and their integration with MPI implementations. The focus of our study is on lossy compression. This emphasis is due to the significantly lower compression ratios observed with lossless methods when applied to scientific datasets~\cite{Di2016SZ,Tao2017SZ}.

\subsection{MPI Collective Communication}
\label{sec:MPI-Collectives}
There are many types of MPI collective operations, which can be divided into two sub-categories --- collective data movement and collective computation according to their communication patterns.
\subsubsection{Collective data movement}
Collective data movement includes gather, allgather, scatter, all-to-all, and so on. The gather operation collects the data from different processes and stores the collected data into the root process. In comparison, allgather stores the collected data to every participated process. As an opposite of gather, the scatter operation divides the data in the root process and sends the split data to all the processes. The all-to-all operation acts as the ``allscatter", which collectively scatters data on each process to each other.

\subsubsection{Collective computation}
Allreduce/reduce are two popular collective computation operations. We use MPI\_SUM as an example to explain the working principle as it is frequently used. The reduce routine will sum up all the data entries from all the processes in the same communicator and store the sum into the root process. The Allreduce does the same thing but keeps a copy of the sum on every process in that communicator. Another widely-used operation is the reduce-scatter, which acts like a combination of the reduce and scatter: the reduced sum is scattered into all the processes.

\subsection{High-speed Lossy Compressors}
Compression developers and scientific researchers have shown great interest in high-speed lossy compression due to its ability to achieve a very high compression ratio. SZ~\cite{Di2016SZ, Tao2017SZ, Liang2018SZ} is an example of a fast, error-bounded lossy compressor that offers performance similar to other compressors, including FPZIP~\cite{Lindstrom2006FPZIP} and SZauto~\cite{Zhao2020SZauto}. Another compressor that is known for its relatively high compression ratios and even faster compression speed than SZ is ZFP~\cite{Lindstrom2014ZFP}. However, neither of these compressors can match the speed of SZx~\cite{Yu2022SZx}, which achieves a compression throughput of 700-900 MB/s on CPUs~\cite{Yu2022SZx}. Additionally, our comprehensive experiments and analysis, as discussed in Section \ref{sec:high-speed-compressor}, demonstrate that SZx is the most suitable lossy compressor for MPI collective operations. As such, we developed our customized compressor for MPI collectives based on SZx, which will be detailed in Section \ref{sec:high-speed-compressor}.

\vspace{-2mm}
\subsection{Lossy Compression-enabled MPI Implementations}
Researchers have shown interest in using lossy compression to improve MPI communication performance for years. Zhou et al. proposed GPU-compression enhanced point-to-point communication \cite{Zhou2021GPUCOMPRESSION}, and several optimized MPI collective operations \cite{Zhou2022GPUCOMPRESSIONALLTOALL, Zhou2023IPDPS, Zhou2022HiPC} using 1D fixed-rate ZFP~\cite{Lindstrom2014ZFP} on GPUs, but their solutions are either showcasing limited overlapping between compression and communication or subject to the fixed-rate mode compression, leading to the substandard compression quality, as will be demonstrated later in Section \ref{sec-image-stacking}. Conversely, our general framework can optimize the collective performance of all MPI collectives while maintaining controlled errors.

\section{C-Coll Design and Optimization}
\label{sec-design_and_optimizations}

Figure \ref{fig:architecture} presents the overall design architecture of \textit{C-Coll}. We highlight the newly designed modules as green boxes. The primary contributions lie in the performance optimization layer and the middleware layer. We carefully characterize the performance of multiple state-of-the-art error-bounded lossy compressors in the context of MPI collectives and select the best-qualified compressor -- SZx, which will be detailed in Section \ref{sec:high-speed-compressor}. We also propose a series of performance optimization strategies specifically for both of the two collective types (data movement and collective computation), as indicated in the figure. The corresponding details will be discussed in Sections \ref{sec-data-movement-framework}, \ref{sec-collective-computation}, and \ref{sec:Customize-SZx-reduce}. 

\begin{figure}[ht]
    \centering
    {\includegraphics[width=1\linewidth]{./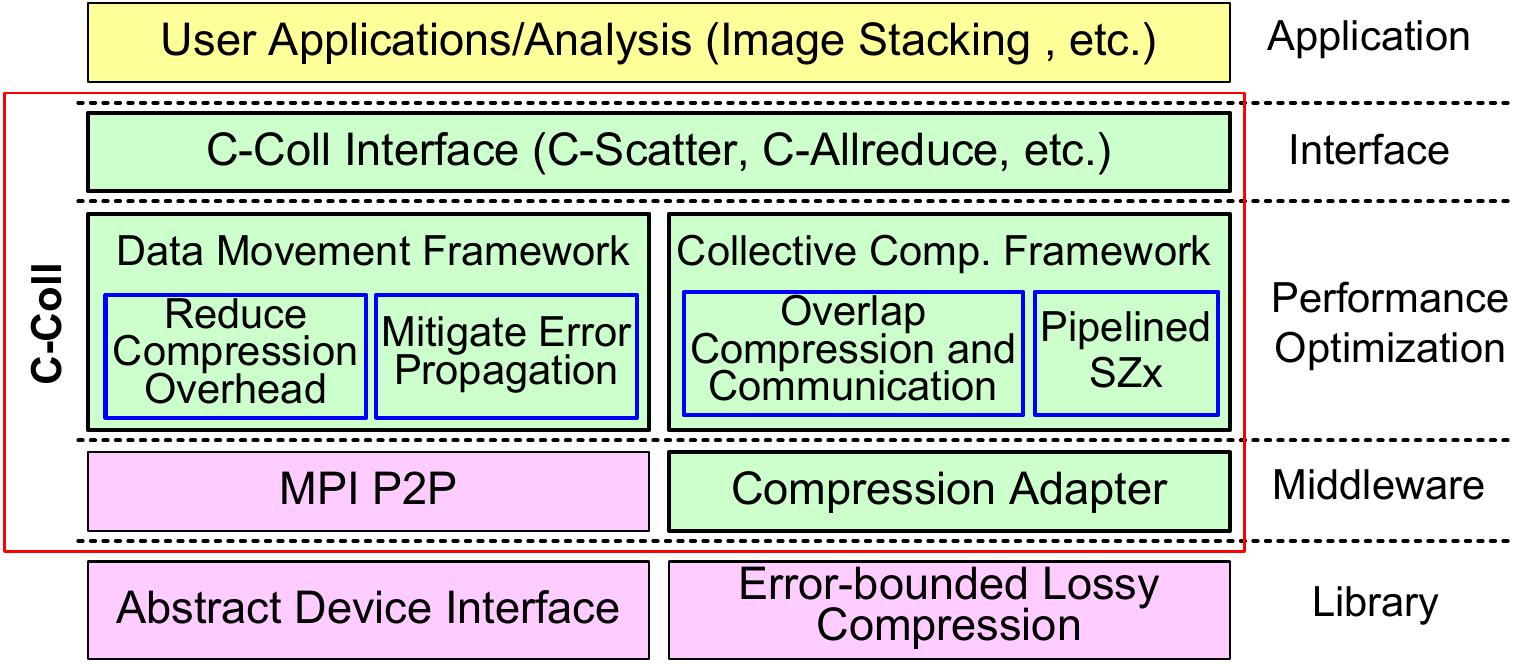}}
    \caption{Design architecture (yellow box: applications; green box: new contributed modules; purple box: third-party).} 
    \label{fig:architecture}
\end{figure}


\subsection{Two Proposed Novel Frameworks for Compression-enhanced Collectives}
To integrate lossy compression into MPI collective communications, at least two important aspects must be considered: performance and accuracy. In general, MPI collective operations can be divided into two groups: collective data movements, and collective computation. Instead of directly using the CPR-P2P method, we propose two frameworks to implement collective communications for each group, which can maximize the collective operation performance.

\subsubsection{\textbf{Collective data movement framework}}
\label{sec-data-movement-framework}

In this subsection, we detail our strategies for addressing the issues of communication imbalance and compression overhead in our optimized collective data movement framework.
For collective data movement operations, each process in the same communicator needs to communicate with each other to exchange data. If we directly use the CPR-P2P method, the sender needs to compress the data every time before sending it, and the receiver is required to decompress the data upon the data arrival. Most of the compression and decompression overheads in CPR-P2P, actually, can be avoided by carefully setting the timing of compression operations, as the original data have not been modified during the intensive communications. Besides, the CPR-P2P can cause unbalanced communications in that input data on different processes have various compressed data sizes. Such unbalanced communication will slow down the overall collective performance, resulting in a sub-optimal performance. With our framework, we can balance the communications with a fixed pipeline size as the compressed data sizes are decided at the beginning of the intensive communications. In the following text, we illustrate our idea with two examples: a ring-based allgather algorithm and a binomial tree broadcast algorithm. The same philosophy can be easily extended to other collective algorithms in this category.
\begin{figure}[ht]
    \centering
    {\includegraphics[width=0.8\linewidth]{./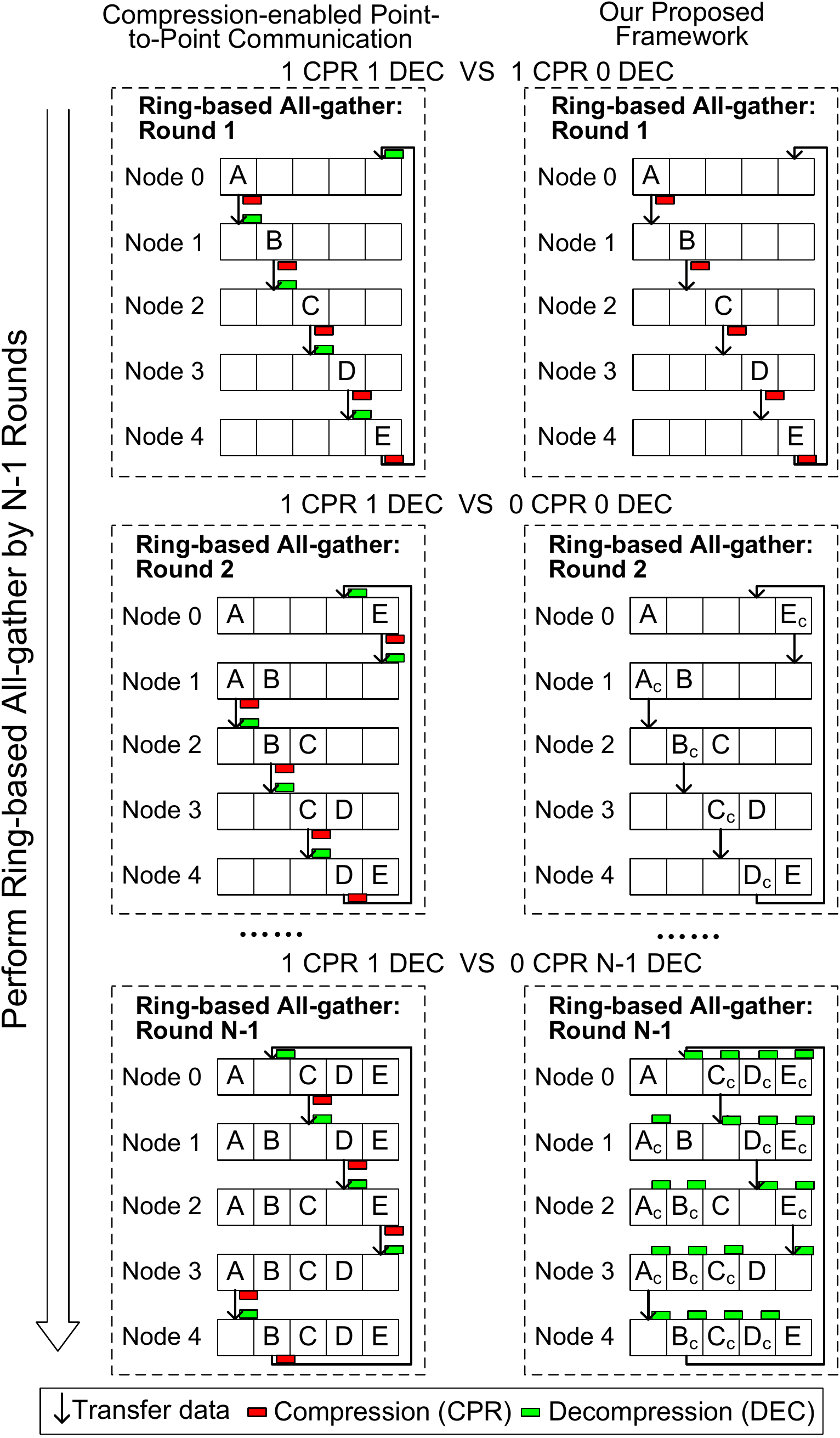}}
    \caption{High-level design of our collective data movement framework in the ring-based allgather algorithm to mitigate compression error propagation. $A$ means the original data and $A_c$ means the compressed data. This rule applies to other data chunks as well. This algorithm completes in $N$$-$1 rounds, where $N$ is the number of processes.} 
    \label{fig-allgather-design}
    \vspace{-2mm}
\end{figure}

Figure \ref{fig-allgather-design} shows the high-level comparison of our proposed framework versus the CPR-P2P in the ring-based allgather algorithm. In this algorithm, $N$$-$1 rounds are required to get the gathered results on every process, where $N$ is the number of processes in the communicator. In order to use lossy compression to reduce communication cost in the ring-based algorithm, the straight-forward idea is performing compression and decompression at each round. Instead, our design does not decompress the received data until the last round, thus significantly decreasing the compression overhead from ($N$$-$1)$\cdot$$T_{chunk}$ to $T_{chunk}$, where $T_{chunk}$ is the compression cost of one chunk. When $N$ is large, our novel framework could have nearly $N$$\times$ better performance compared with CPR-P2P in terms of compression. Note that the decompression cost remains the same. 

Figure \ref{fig-broadcast-design} presents a similar comparison in the binomial tree broadcast algorithm. There are $log_2 N$ rounds before the completion, where $N$ refers to the total process count. We notice that our proposed framework could reduce the compression and decompression costs from $log_2$$N$$\cdot$($T_{comp}$+$T_{decom}$) to $T_{comp}$+$T_{decom}$ and the performance improvement is $log_2 N$$\times$, where $T_{comp}$ and $T_{decom}$ are the compression time and decompression time of the data at the root process, respectively.

Besides the compression cost, the CPR-P2P method can lead to an undesirable error propagation issue during collective sends and receives, as the same data chunk undergoes multiple rounds of compression and decompression. Our framework also solves this issue by compressing the same data chunk for only one time. Similar to the analysis of compression overhead, for the absolute error bounded compression, our proposed framework can decrease the worst case accuracy loss by ($N$$-$1)$\times$ and ($log_2 N$)$\times$ in the ring-based allgather and binomial tree broadcast algorithm, respectively.

\subsubsection{\textbf{Collective computation framework}}
\label{sec-collective-computation}
For collective computation routines, the data entries from all processes in the same communicator need to collectively compute with each other. Unlike in the case of collective data movement, the data transferred in this communication pattern can be updated. As a result, the previous framework cannot be utilized here thus we need to propose a new framework. Despite the updated transferred data precluding us from diminishing compression, we find an opportunity to hide communication inside the compression and decompression.
To clearly elaborate our proposed design, we use the ring-based reduce\_scatter algorithm as an instance. Note that this framework can be easily extended to other collective computation operations. 

\begin{figure}[ht]
    \centering
    {\includegraphics[width=0.8\linewidth]{./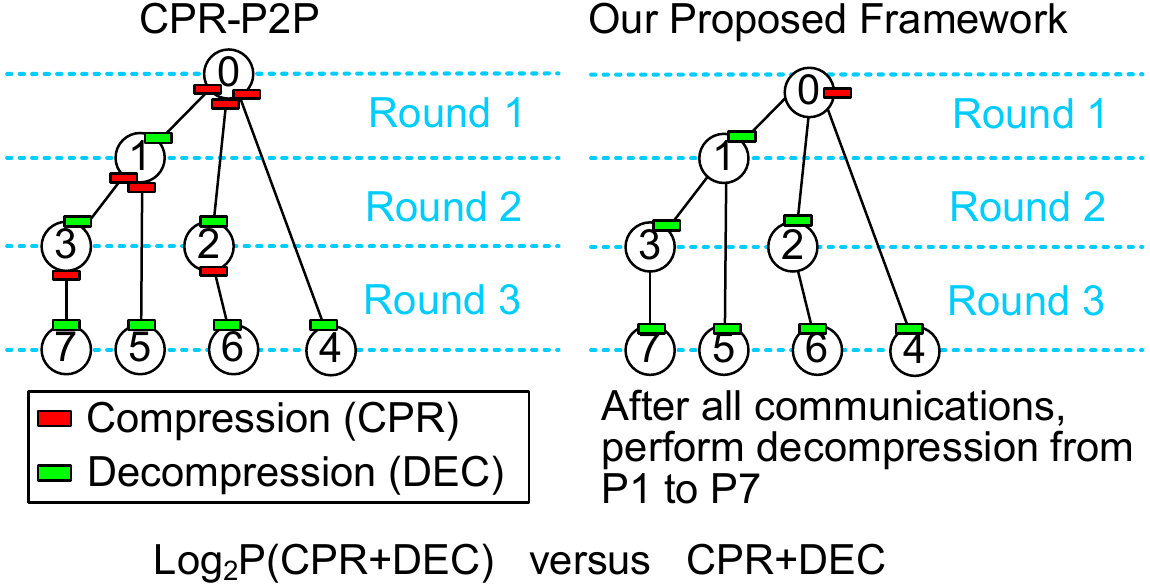}}
    \caption{High-level design of our collective data movement
    framework in the binomial tree broadcast algorithm. It completes in $log_2{N}$ rounds, where $N$ is the number of processes.} 
    \label{fig-broadcast-design}
\end{figure}

Our proposed framework for collective computation is depicted in Figure \ref{fig-reduce-scatter-design}. It employs a ring-based reduce\_scatter algorithm, where each process is required to exchange message chunks with its neighboring processes. For large datasets, these chunk sizes become substantial as they are determined by dividing the size of the input data by the number of processes. In the initial CPR-P2P model, compression and decompression occur before and after any communication, respectively. Consequently, a single round incurs three types of overhead: compression/decompression for one message chunk, send/receive operations for the compressed message chunk, and reduction operation for one message chunk. Typically, the compression-related overhead is more significant than the send/receive overhead, as compressed data sizes are considerably smaller than their original counterparts. In our redesigned approach, we significantly mitigate the send/receive overhead by actively pulling communication progress within the compression and decompression phases. This substantially reduces the overall communication time.

\begin{figure}[ht]
    \centering
    {\includegraphics[width=0.8\linewidth]{./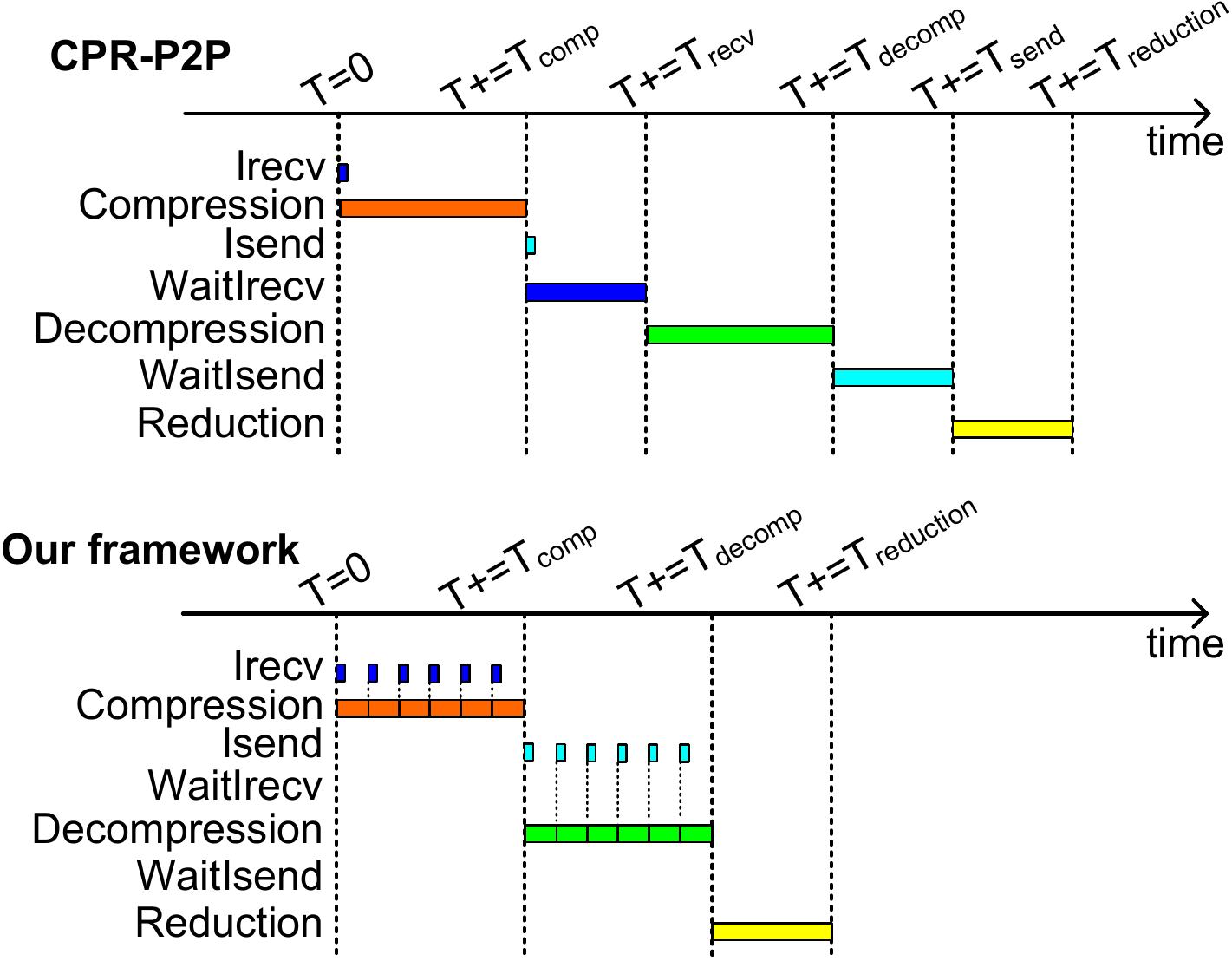}}
    \caption{High-level design of our collective computation framework in the ring-based reduce-scatter algorithm.} 
    \label{fig-reduce-scatter-design}
\end{figure}


\subsection{Theoretical Analysis of Error Propagation in C-Coll}
In this section, we prove the error-bounding nature of our \textit{C-Coll} framework mathematically. In the following analysis, we assume the lossy compression error $e$ for data $x$ follows a normal distribution, without loss of generality. Specifically, the normal distribution is represented as $e \sim N(\mu, \sigma^{2})$ within the range of $[x- \widehat{e}, x+\widehat{e}]$, where $\widehat{e}$ is the compression error bound. This is exemplified in Figure \ref{fig:normal-distri}, in which we compress climate, weather, seismic wave datasets by SZ3 and ZFP. It clearly shows the normal distribution curve generated by Maximum Likelihood Estimation (MLE) fits the measured compression error values very well for different application datasets.

\vspace{-3mm}
\begin{figure}[ht]
    \centering
        \subfloat[SZ3(Climate)]        {\includegraphics[width=0.275\linewidth]{./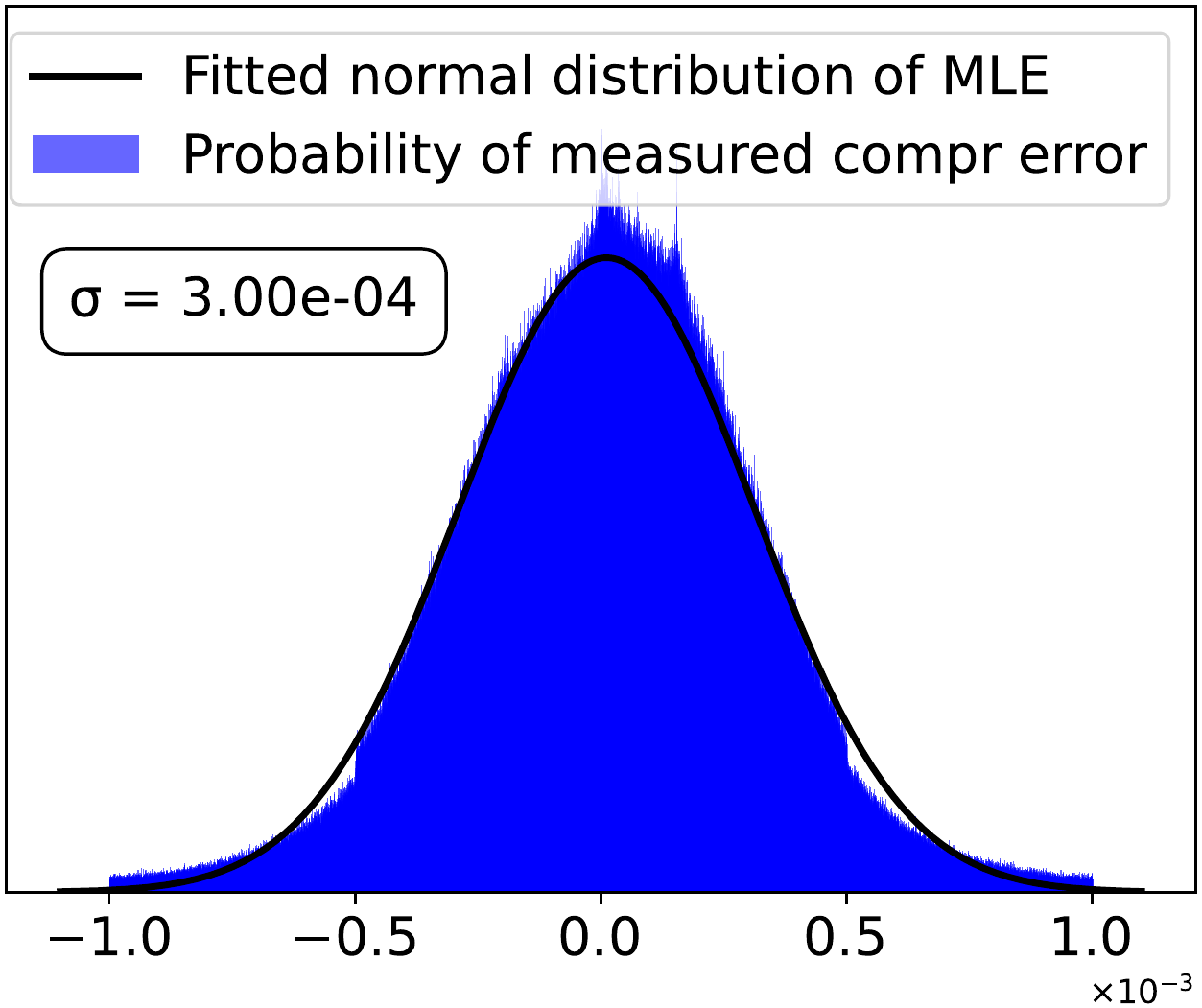}}
        \subfloat[SZ3(Weather)]        {\includegraphics[width=0.35\linewidth]{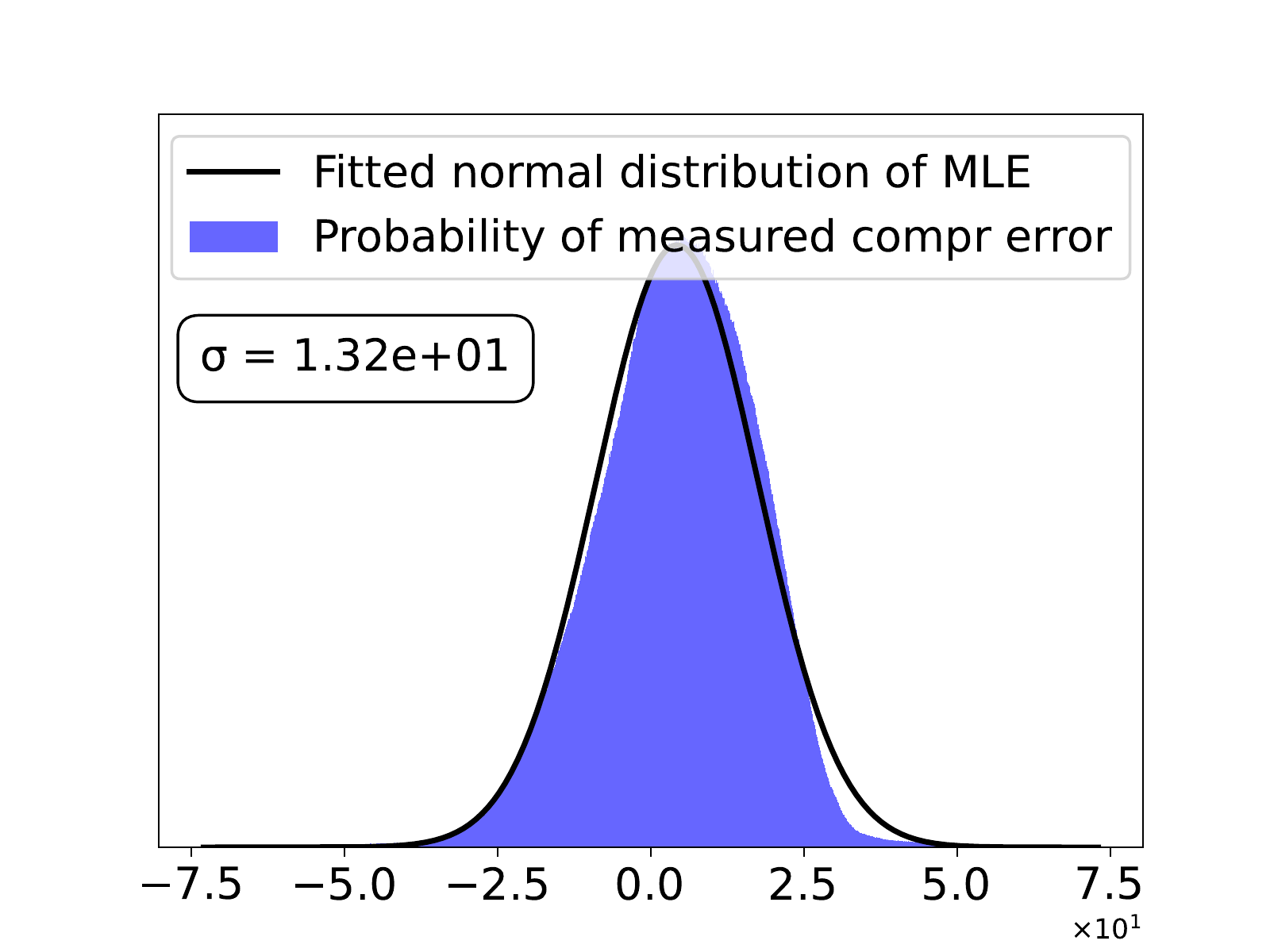}}
        \subfloat[SZ3(Seismic Wave)]   {\includegraphics[width=0.35\linewidth]{./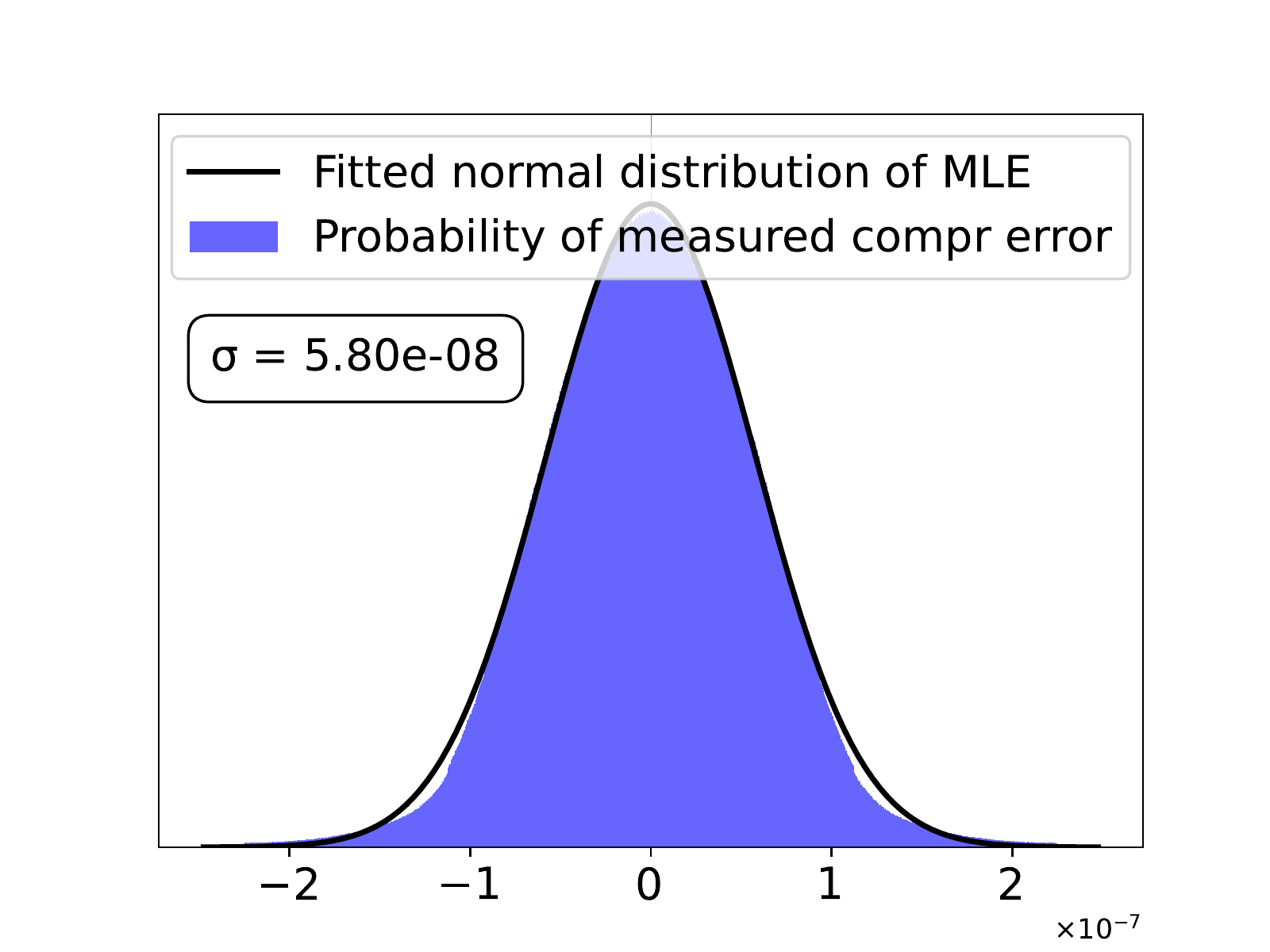}}
        \hspace{3mm}
        \subfloat[ZFP(Climate)]{\includegraphics[width=0.275\linewidth]{./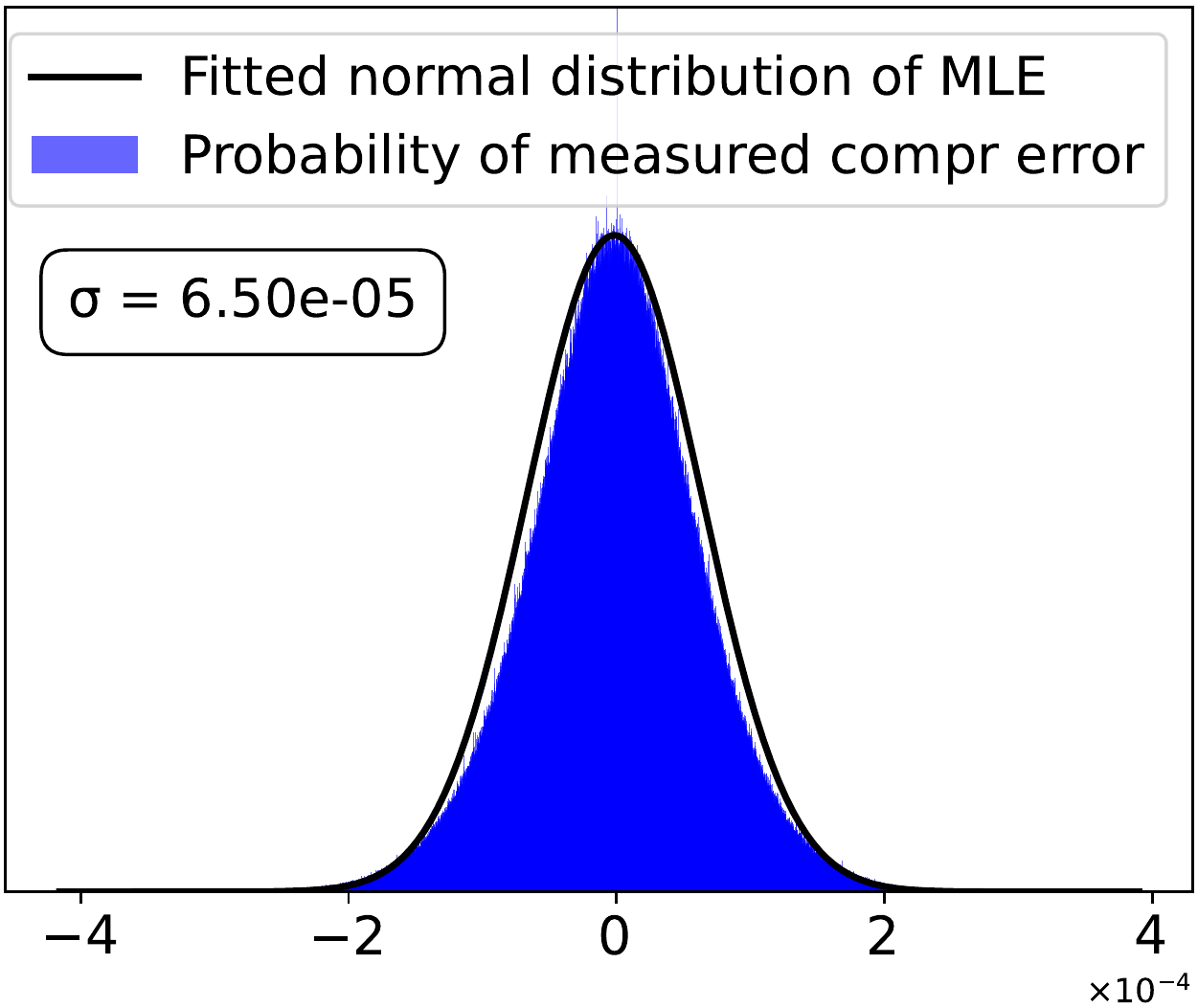}}
        \subfloat[ZFP(Weather)]
        {\includegraphics[width=0.35\linewidth]{./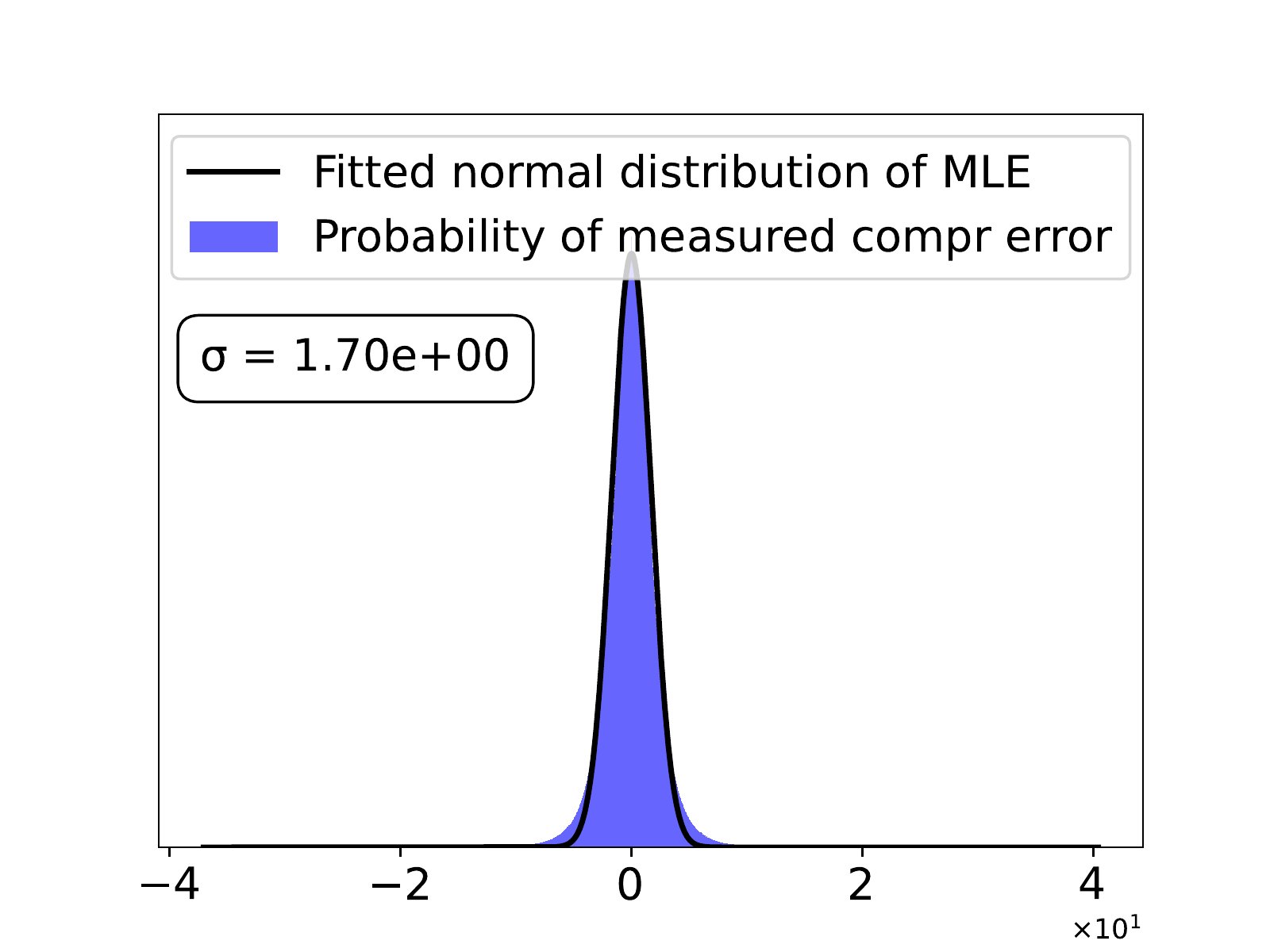}}
        \subfloat[ZFP(Seismic Wave)]
        {\includegraphics[width=0.35\linewidth]{./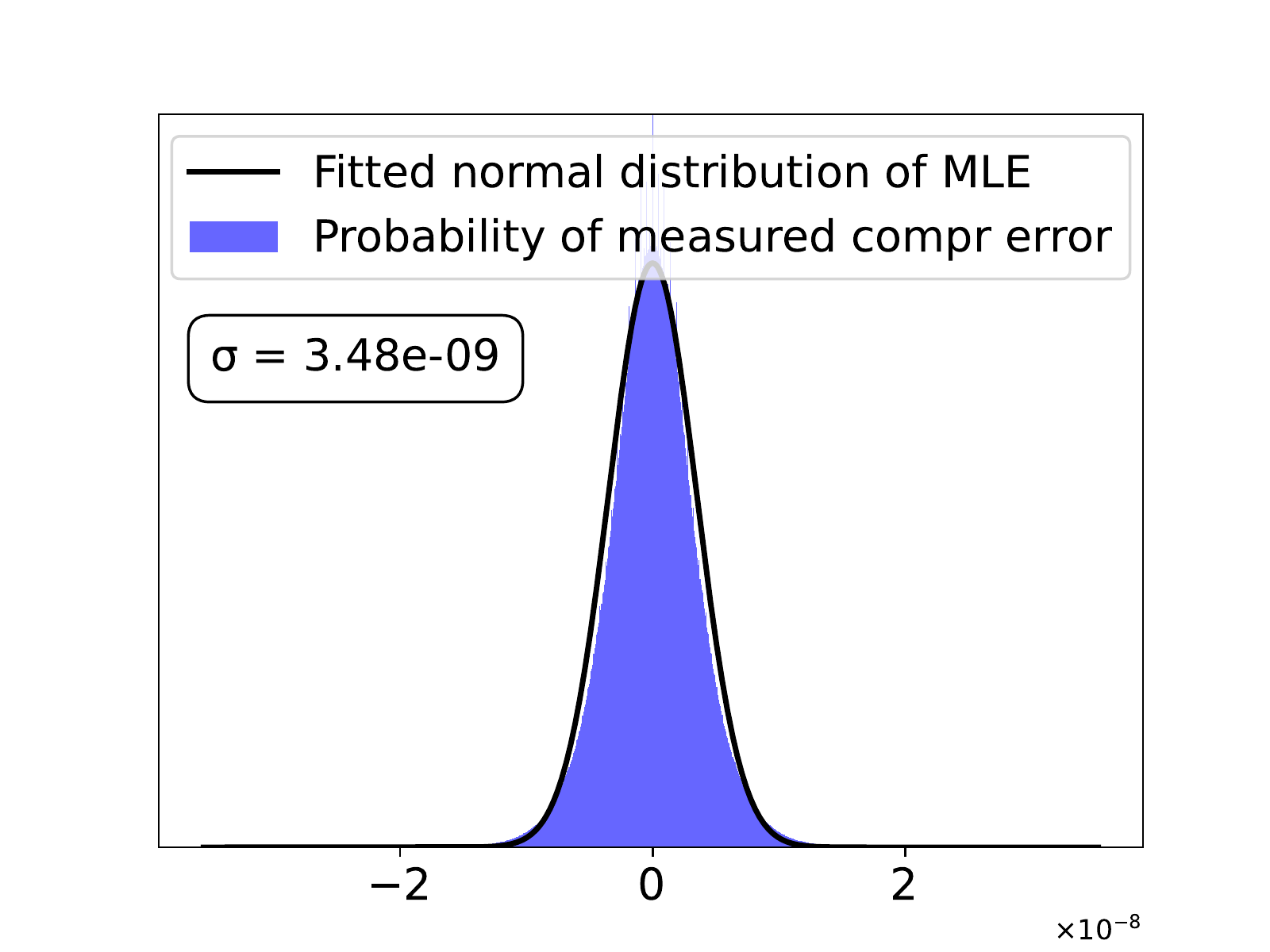}}
        \caption{Exemplifying the normal distribution property of compression errors.}
        \label{fig:normal-distri}
\end{figure}

For the collective communication primitive in MPI, the data will be aggregated gradually from each node during the communication process, which is shown in Fig. \ref{fig-allgather-design}, Fig. \ref{fig-broadcast-design} and Fig. \ref{fig-reduce-scatter-design}. With lossy compression integrated, the compression error will also be aggregated in the data aggregation stage. In the collective data movement framework of \textit{C-Coll}, each data chunk is compressed only once. Thus, the final error for each data point is within $\widehat{e}$. Unlike the data movement framework, the compression error is aggregated in the collective computation framework of \textit{C-Coll}, and the aggregation function often involves the \textit{Sum, Average, Max, Min} operations for the floating point data. We further illustrate the error propagation for these operations. For the collective computation framework, we assume there are $n$ data that are collected from $n$ nodes, and the compression error for each of them is $e_{i}$, which follows the normal distribution $e_{i} \sim N(\mu_{i}, \sigma_{i}^{2})$.

\begin{theorem}
Based on the above analysis, the final aggregated error for \textit{Sum} operation falls into the interval $[-2\sqrt{n}\sigma, 2\sqrt{n}\sigma]$ with the probability of $95.44\%$, where $n$ is the number of computing nodes in MPI and $\sigma$ is the variance of the error bound of the lossy compressor.
\end{theorem}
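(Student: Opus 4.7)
The plan is to build on the modeling assumption established just before the theorem, namely that each per-node compression error $e_i$ is a normal random variable on the truncated interval $[-\widehat{e}, \widehat{e}]$. First I would normalize the setup: since the compression error bound interval is symmetric about zero, I would take $\mu_i = 0$ for every node, and (to match the single symbol $\sigma$ appearing in the statement) assume that the $n$ errors are identically distributed with common standard deviation $\sigma_i = \sigma$. I would also assume mutual independence of the $e_i$, on the grounds that the data chunks held by different MPI processes come from independent subdomains and are compressed in isolation before the reduction.

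Next I would invoke the standard closure property of the Gaussian family under convolution: a finite sum of independent normal random variables is again normal, with mean equal to the sum of means and variance equal to the sum of variances. Applying this to $E := \sum_{i=1}^{n} e_i$ gives
\begin{equation*}
E \;\sim\; N\!\Bigl(\sum_{i=1}^{n}\mu_i,\;\sum_{i=1}^{n}\sigma_i^{2}\Bigr) \;=\; N\bigl(0,\,n\sigma^{2}\bigr),
\end{equation*}
so the standard deviation of the aggregated error is $\sqrt{n}\,\sigma$. This is the key quantitative step, because it shows that the aggregated error grows only like $\sqrt{n}$, not like $n$, which is precisely what makes compression-enabled \emph{Sum}-reductions viable at scale.

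Finally I would apply the classical 68–95–99.7 rule for the normal distribution: for any $X \sim N(0,\tau^{2})$, the event $|X|\le 2\tau$ has probability $\mathrm{erf}(\sqrt{2}) \approx 0.9544$. Instantiating this with $\tau = \sqrt{n}\,\sigma$ yields
\begin{equation*}
\Pr\!\bigl(-2\sqrt{n}\,\sigma \,\le\, E \,\le\, 2\sqrt{n}\,\sigma\bigr) \;=\; 0.9544,
\end{equation*}
which is exactly the claim. I would then briefly note that the same strategy transfers to \emph{Average} (divide the bound by $n$, producing an $O(\sigma/\sqrt{n})$ interval) and gives looser qualitative guarantees for \emph{Max}/\emph{Min}, foreshadowing the numerical example with $n=100$ stated in the introduction.

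The main obstacle I anticipate is not algebraic but justificatory: the theorem silently relies on (i) the normality assumption for each $e_i$, (ii) the independence of errors across nodes, and (iii) the homoscedasticity $\sigma_i = \sigma$. Points (i) and (iii) are supported empirically by the MLE fits in Figure~\ref{fig:normal-distri}, but independence across nodes is a modeling choice rather than a proven property of SZx. A fully rigorous proof would need either to defend this independence assumption explicitly, or to replace the exact Gaussian computation with a concentration inequality (e.g., Hoeffding, using the hard bound $|e_i|\le \widehat{e}$) to obtain a distribution-free analogue; I would flag this as the place where the argument is strongest in spirit but weakest in formal rigor.
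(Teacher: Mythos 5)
Your proposal is correct and follows essentially the same route as the paper's own proof: both reduce the aggregated error to $\sum_i e_i \sim N(0, n\sigma^2)$ via the closure of independent Gaussians under summation and then apply the two-sigma rule to obtain the $95.44\%$ interval. Your explicit flagging of the independence and homoscedasticity assumptions is a point the paper leaves implicit (it only empirically verifies normality of the successive errors $e_2, e_3, \ldots$), but it does not change the argument.
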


\begin{proof}
The linear combination (e.g., \textit{Sum} in MPI) of normally distributed random variables also follows a normal distribution that is shown in the formula (\ref{norm_distribution}), where $a_{i}$ denotes various constants for $n$ data.

\begin{equation}
\sum\nolimits_{i=0}^{n} a_{i}e_{i} \sim N (\sum\nolimits_{i=0}^{n} a_{i}\mu_{i}, \sum\nolimits_{i=0}^{n} a_{i}^{2} \sigma_{i}^{2})
\label{norm_distribution}
\end{equation}

For the \textit{Sum} operation in the collective computation framework, the compression error will be involved gradually with the aggregation chain, which is shown in the formula (\ref{sum_aggregation}):

\begin{equation}
\begin{aligned}
x_{sum} & = \big(\big(\big((x_{1} + e_{1})+x_{2}\big)+e_{2}\big)+...+e_{n}\big) \\
& = (x_{1}+e_{1})+(x_{2}+e_{2})+...+(x_{n}+e_{n})
\end{aligned}
\label{sum_aggregation}
\end{equation}

Where $x_{i}$ indicates the data collected from node $i$ and $x_{sum}$ represents the final aggregated data. We further demonstrate that the compression error $e_2$ remains to follow the normal distribution after compression through Figure \ref{fig:e2-normal-distri}, which illustrates the probability of the measured compression error ($e_2$) with the fitted MLE curve. The same property also holds for subsequent compression errors in the aggregation chain such as $e_3$, $e_4$, and so on.

\vspace{-3mm}
\begin{figure}[ht]
    \centering
        \subfloat[SZ3($e_2$)]        {\includegraphics[width=0.35\linewidth]{./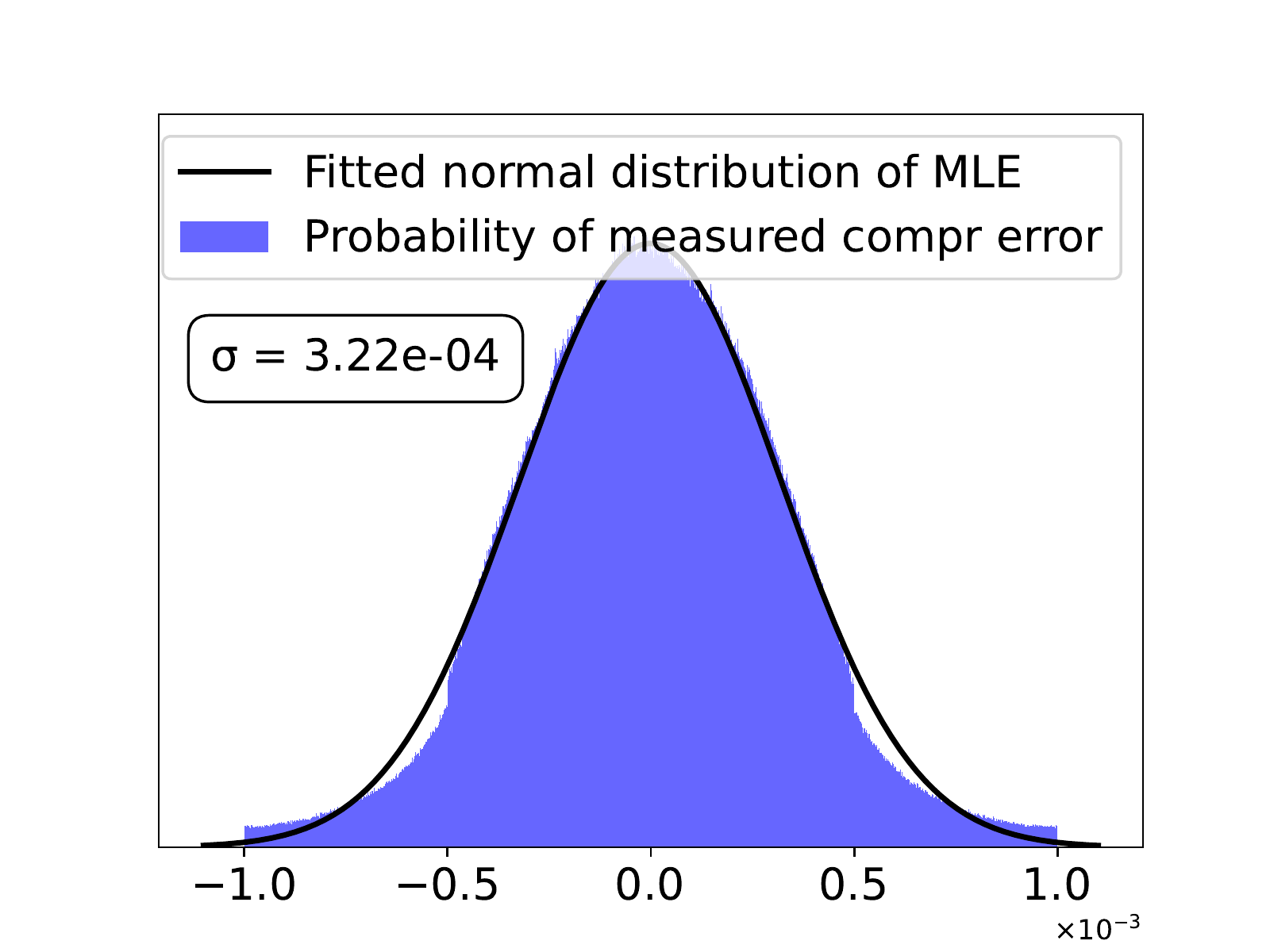}}
        \hspace{3mm}
        \subfloat[ZFP($e_2$)]
        {\includegraphics[width=0.35\linewidth]{./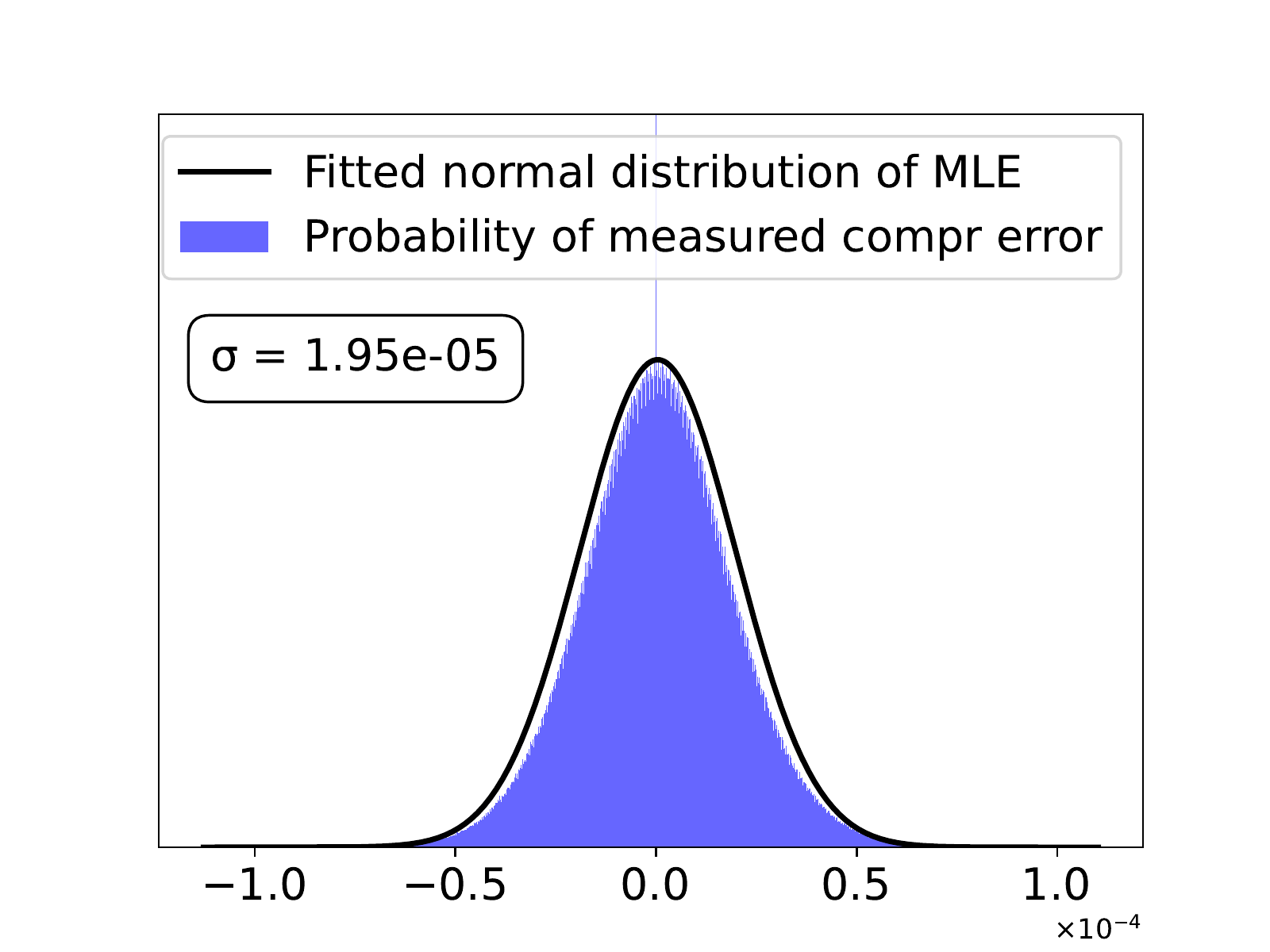}}
        \caption{Exemplifying the normal distribution property of compression error $e_2$.}
        \label{fig:e2-normal-distri}
\end{figure}

From the formula (\ref{sum_aggregation}), we can calculate the aggregated compression error as $\widetilde{e}_{sum} = \sum_{i=0}^{n} e_{i}$. The aggregated error $\widetilde{e}_{sum}$ follows the normal distribution as formula (\ref{error_distribution}):

\begin{equation}
\widetilde{e}_{sum} \sim N(\sum\nolimits_{i=0}^{n} \mu_{i}, \sum\nolimits_{i=0}^{n} \sigma_{i}^{2})
\label{error_distribution}
\end{equation}

The formula (\ref{error_distribution}) indicates that the variance $\sigma^{2}$ of the aggregated error $\widetilde{e}_{sum}$ will be restricted well. When we utilize the same compression error bound across various nodes, the aggregated error conforms to a normal distribution represented as $\widetilde{e}_{sum} \sim N(0, n\sigma^{2})$. Therefore, the final aggregated error falls within the interval $[-2\sqrt{n} \sigma, 2\sqrt{n} \sigma]$ with the probability of $95.44\%$ according to the properties of normal distribution. 

\end{proof}

Since the compression error is bounded by the error bound $\widehat{e}_{i}$ and the error follows the norm distribution, we can assume that $\widehat{e}_{i} \approx 3 \sigma_{i}$ ($\widehat{e}_{i}$ bounded to $3 \sigma_{i}$ with probability of $99.74\%$).

\begin{corollary}
Based on the above assumption, the final aggregated error falls within the interval $[-\frac{2}{3}\sqrt{n} \widehat{e}, \frac{2}{3}\sqrt{n} \widehat{e}]$ with the probability of $95.44\%$. For example, if there are $100$ nodes, the final aggregated error will be bounded within the range $[-\frac{20}{3} \widehat{e}, \frac{20}{3} \widehat{e}]$ with a probability of $95.44\%$.

\end{corollary}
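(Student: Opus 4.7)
The plan is to derive the Corollary as a direct substitution into the bound established in Theorem~1, using the stated relationship between the user-specified error bound $\widehat{e}_i$ and the standard deviation $\sigma_i$ of the compression error. The key observation is that Theorem~1 already delivers the probabilistic bound $[-2\sqrt{n}\sigma,\, 2\sqrt{n}\sigma]$ at confidence $95.44\%$ for the aggregated sum error, under the assumption of a common $\sigma$ across nodes. The Corollary is then obtained by re-expressing $\sigma$ in units of the compression error bound.

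First, I would restate the bound from Theorem~1 verbatim and emphasize that it holds under the homogeneous-error-bound setting (same compressor configuration across all $n$ nodes), which is the natural operating regime for \textit{C-Coll}. Next, I would justify the substitution: because the compression error is hard-clipped by the error bound $\widehat{e}$ and is approximately normally distributed (as validated empirically in Figures~\ref{fig:normal-distri} and~\ref{fig:e2-normal-distri}), the three-sigma rule of the normal distribution, which states that approximately $99.74\%$ of the probability mass of $N(0,\sigma^2)$ lies within $\pm 3\sigma$, motivates identifying the hard bound $\widehat{e}$ with $3\sigma$. This gives $\sigma \approx \widehat{e}/3$, which is exactly the assumption invoked immediately before the Corollary.

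Plugging $\sigma = \widehat{e}/3$ into the $[-2\sqrt{n}\sigma,\, 2\sqrt{n}\sigma]$ interval from Theorem~1 immediately yields the interval $[-\tfrac{2}{3}\sqrt{n}\,\widehat{e},\, \tfrac{2}{3}\sqrt{n}\,\widehat{e}]$, and the probability level $95.44\%$ carries over unchanged since the substitution is only a reparametrization of the same Gaussian. The example with $n = 100$ then follows from $\tfrac{2}{3}\sqrt{100}\,\widehat{e} = \tfrac{20}{3}\widehat{e}$, producing $[-\tfrac{20}{3}\widehat{e},\, \tfrac{20}{3}\widehat{e}]$ as claimed.

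The proof is essentially a one-line substitution, so there is no deep obstacle; the only subtle point worth flagging is the \emph{approximation} character of $\widehat{e} \approx 3\sigma$. Strictly speaking, the compression error is supported on $[-\widehat{e}, \widehat{e}]$, whereas a true Gaussian has unbounded support, so the identification $\widehat{e} = 3\sigma$ is a modeling choice rather than an equality. I would note explicitly that this choice is conservative in the sense that a truncated normal with the same $\sigma$ would concentrate slightly more mass inside $\pm 2\sqrt{n}\sigma$, so the $95.44\%$ figure is if anything a lower bound on the true confidence once the truncation is taken into account. With that caveat stated, the Corollary follows directly from Theorem~1.
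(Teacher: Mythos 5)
Your proposal is correct and matches the paper's (implicit) argument exactly: the corollary is obtained by substituting $\sigma \approx \widehat{e}/3$ from the stated assumption $\widehat{e} \approx 3\sigma$ into the interval $[-2\sqrt{n}\sigma,\, 2\sqrt{n}\sigma]$ of Theorem~1, and then evaluating at $n=100$. Your additional remark that the truncation of the error to $[-\widehat{e},\widehat{e}]$ makes the $95.44\%$ figure conservative is a sensible caveat the paper does not state, but it does not change the route of the proof.
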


\begin{corollary}
Being similar to the \textit{Sum} operation, the final aggregated error for the \textit{Average} operation in collective computation framework follows the normal distribution $\widehat{e}_{avg} \sim N(0, \frac{\sigma^{2}}{n})$. The final aggregated error will be reduced extremely compared to the original error by $n$ times. 
\end{corollary}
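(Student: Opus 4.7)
The plan is to mirror the proof of the Sum theorem, exploiting the fact that the Average is simply the Sum rescaled by a factor $\frac{1}{n}$. First I would write out the aggregation chain analogous to formula (\ref{sum_aggregation}), which after telescoping through the ring gives the post-compression output $x_{avg} = \frac{1}{n}\sum_{i=1}^{n}(x_i + e_i)$, so the aggregated error is $\widetilde{e}_{avg} = \frac{1}{n}\sum_{i=1}^{n} e_i$. This is exactly a linear combination of the per-node compression errors with all coefficients equal to $a_i = \frac{1}{n}$, which places us inside the hypotheses of formula (\ref{norm_distribution}).

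Next I would invoke formula (\ref{norm_distribution}) with $a_i = \frac{1}{n}$ and, under the same assumptions used for the Sum theorem (common compression error bound across nodes so that $e_i \sim N(0,\sigma^2)$), compute $\widetilde{e}_{avg} \sim N\bigl(0, \sum_{i=1}^{n} \frac{1}{n^2}\sigma^2\bigr) = N\bigl(0, \frac{\sigma^2}{n}\bigr)$, which matches the stated distribution. From here the ``reduced by $n$ times'' claim follows directly by comparing the resulting variance $\frac{\sigma^2}{n}$ with the per-node variance $\sigma^2$. I would also note the corresponding probabilistic bound: using $\widehat{e} \approx 3\sigma$ as in Corollary 1, the $95.44\%$ confidence interval for the aggregated error shrinks to $\bigl[-\tfrac{2}{3\sqrt{n}}\widehat{e},\, \tfrac{2}{3\sqrt{n}}\widehat{e}\bigr]$, which is a factor of $n$ tighter than the analogous Sum interval $\bigl[-\tfrac{2}{3}\sqrt{n}\,\widehat{e},\,\tfrac{2}{3}\sqrt{n}\,\widehat{e}\bigr]$.

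The only genuine obstacle, and a mild one, is justifying that the telescoping of the ring-based Average implementation really does collapse the nested compress-decompress-accumulate pattern into the clean form $\frac{1}{n}\sum_{i=1}^{n}(x_i + e_i)$ with independent Gaussian $e_i$. For this I would reuse the observation already supported by Figure \ref{fig:e2-normal-distri}, namely that intermediate compression errors $e_2, e_3, \ldots$ empirically retain the same normal profile as $e_1$, so that treating them as i.i.d.\ draws from $N(0,\sigma^2)$ is consistent with the Sum theorem's setup. Once this equivalence is made explicit, the corollary is an immediate rescaling of the Sum result by $\frac{1}{n}$, and no further computation is needed.
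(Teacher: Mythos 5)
Your proposal is correct and follows the same route the paper intends: the paper states this corollary without an explicit proof, relying on exactly the rescaling you make explicit, namely that the averaged error is $\frac{1}{n}\sum_{i} e_i$, so applying the linear-combination formula with $a_i = \frac{1}{n}$ turns the Sum variance $n\sigma^2$ into $\frac{\sigma^2}{n}$. Your added confidence-interval comparison and the note on reusing the empirical normality of the intermediate errors $e_2, e_3, \ldots$ are consistent with the paper's setup and fill in details the paper leaves implicit.
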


\begin{theorem}
For the \textit{Max, Min} operations, the final error follows the normal distribution $ \widetilde{e}_{max, min} \sim N (0, (2-\frac{n+2}{2^{n}})\sigma^{2})$.
\end{theorem}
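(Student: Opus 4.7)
The plan is to adapt the argument used for the Sum case (Theorem~1) to the nonlinear Max (and, symmetrically, Min) operator following the ring-based reduction depicted in Figure~\ref{fig-reduce-scatter-design}. Unlike Sum, Max does not distribute linearly over a normal combination, so I would first linearize it via the identity $\max(a,b)=\tfrac{a+b}{2}+\tfrac{|a-b|}{2}$. To first order, $\max(a+e_a,b+e_b)-\max(a,b)$ equals $e_a$ when $a>b$ and $e_b$ otherwise, so under the assumption that the per-process data are exchangeable enough that each argument is equally likely to be the winner, the error introduced by each Max step is a $1/2$--$1/2$ mixture of two zero-mean normals, which is itself approximately a zero-mean normal whose variance is the average of the component variances.

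Next, I would set up the per-stage recurrence for the variance. At hop $k$ in the ring, a chunk carrying variance $V_{k-1}$ is re-compressed (adding an independent $\sigma^2$) and combined via Max with a fresh local value of variance $0$; the $1/2$-winner rule then yields a linear recurrence of the form $V_k=\tfrac{1}{2}(V_{k-1}+\sigma^2)+b_k\sigma^2$, where the small correction $b_k$ accounts for boundary effects (the initial compression at the originating node, the repeated re-compressions, and the fact that the final stage is not re-compressed). I would fix $b_k$ by directly matching the pipeline in Figure~\ref{fig-reduce-scatter-design} and then solve the recurrence in closed form; a convenient reformulation is $V_n=\tfrac{1}{2}V_{n-1}+(1-2^{-n})\sigma^2$, which with the appropriate base case gives $V_n=(2-(n+2)/2^n)\sigma^2$, verifiable by a one-line induction.

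Finally, since at each stage the error is a mixture of zero-mean normals with constant mixing probabilities, the accumulated error at the end of the ring is still approximately $N\!\big(0,(2-(n+2)/2^n)\sigma^2\big)$; this follows from the linearity of normal combinations within each branch of the mixture, together with a CLT-style collapse argument over the compounded mixture choices, mirroring the route taken in Theorem~1.

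The main obstacle will be making the $1/2$-winner assumption rigorous: strictly speaking, which of the two operands wins at each stage is determined by the deterministic data values, so the proof requires either an exchangeability hypothesis on the per-process data or an explicit averaged/expected-case interpretation, analogous to the implicit assumption in Theorem~1. A secondary, more technical hurdle is pinning down the boundary constants $b_k$ so that the closed form matches $(2-(n+2)/2^n)\sigma^2$ \emph{exactly} rather than a merely asymptotically equivalent expression; this requires careful bookkeeping of which stages inject fresh compression noise and which only inherit the accumulated error.
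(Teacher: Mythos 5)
Your proposal follows essentially the same route as the paper: the paper's entire argument is the $\tfrac{1}{2}$-winner observation (``there is a $\tfrac{1}{2}$ probability that we can choose the non-compressed data'') followed by the direct weighted sum $\frac{1}{2^{n}}n\sigma^{2}+\frac{1}{2^{n-1}}(n-1)\sigma^{2}+\cdots+\frac{1}{2}\sigma^{2}=(2-\frac{n+2}{2^{n}})\sigma^{2}$, which is exactly what your recurrence $V_k=\frac{1}{2}V_{k-1}+(1-2^{-k})\sigma^{2}$ unrolls to. The obstacles you flag --- justifying the equal-winner probability, the exact boundary bookkeeping, and the fact that a mixture of normals is not itself normal --- are all genuine, but the paper's one-line proof asserts its way past all three, so your treatment is, if anything, the more careful of the two.
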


\begin{proof}
Since we need to compare the data from the neighbored node gradually, there is a $\frac{1}{2}$ probability that we can choose the non-compressed data. Otherwise, the selected data will contain an error within the error bound $\widehat{e}$. Therefore, the variance of the final aggregated error can be calculated as following formula (\ref{max_min}):
\begin{equation}
\vspace{-3mm}
\frac{1}{2^{n}}n\sigma^{2} + \frac{1}{2^{n-1}}(n-1)\sigma^{2} +...+\frac{1}{2}\sigma^{2} = (2-\frac{n+2}{2^{n}}) \sigma^{2}
\label{max_min}
\end{equation}
\end{proof}

\subsection{Identify Best-qualified High-speed Error-bounded Lossy Compressor}
\label{sec:high-speed-compressor}
In this section, we compare various lossy compressors and select the most suitable one for MPI collectives. As shown in the previous analysis, in addition to controlling the data distortion by error bounds, the compression throughput and compression ratio are two critical metrics to consider.
According to the prior literature \cite{Di2016SZ,Zhao2020SZauto,sz3,Liang2018SZ,Yu2022SZx}, ZFP and SZx exhibit much higher compression speed than other compressors, including SZ2 \cite{Liang2018SZ}, SZ3 \cite{sz3}, FPZIP \cite{Lindstrom2006FPZIP}, and Auto-SZ \cite{Zhao2020SZauto}. Therefore, we focus on ZFP and SZx in particular and select the best-qualified one for compression-enabled collective communication.

In ZFP, users can set a fixed compression rate in the fixed-rate (FXR) mode or a fixed absolute error bound in the fixed-accuracy (ABS) mode. The two modes have their particular pros and cons. The advantage of the FXR is that it brings up substantial convenience in some use cases that require knowing the compressed data size in advance. In comparison with FXR mode, ZFP's ABS mode features a much higher compression quality on scientific datasets (i.e., higher compression ratio with the same reconstructed data quality), as verified in \cite{fraz}. We briefly analyze the key reason why ZFP's ABS mode has higher compression quality than its FXR mode in the following text. ZFP divides the dataset into small data blocks during the compression. Unlike ZFP(ABS) which allows variable compressed size for each block, ZFP(FXR) strictly forces each block to have the same compression ratio, which may inevitably degrade the quality of reconstructed data in turn. Moreover, the FXR mode cannot control the error bound, which may cause fairly high compression errors on some data points unexpectedly. Compared with ZFP, SZx is an ultra-fast error-bounded compressor driven by the fixed-accuracy mode. 

\begin{table} [ht]
\centering
\caption{OVERALL COMPRESSION/DECOMPRESSION THROUGHPUT (MB/S)}
\label{tab:compression-speed}
\resizebox{0.8\columnwidth}{!}{%
\begin{tabular}{|c|c|cc|cc|cc|}
\hline
\textbf{Datasets} & \textbf{} & \multicolumn{2}{c|}{\textbf{RTM}} & \multicolumn{2}{c|}{\textbf{Hurricane}} & \multicolumn{2}{c|}{\textbf{CESM-ATM}} \\ \hline\hline
\multirow{4}{*}{\textbf{SZx}}  & \textbf{ABS} & \multicolumn{1}{c|}{Com} & Decom & \multicolumn{1}{c|}{Com} & Decom & \multicolumn{1}{c|}{Com} & Decom \\ \cline{2-8}
 & 1E-2 & \multicolumn{1}{c|}{1742} & 3309 & \multicolumn{1}{c|}{1687} & 3640 & \multicolumn{1}{c|}{666} & 1251 \\ \cline{2-8} 
 & 1E-3 & \multicolumn{1}{c|}{1479} & 2723 & \multicolumn{1}{c|}{895} & 1659 & \multicolumn{1}{c|}{533} & 918 \\ \cline{2-8} 
 & 1E-4 & \multicolumn{1}{c|}{1288} & 2215 & \multicolumn{1}{c|}{644} & 1168 & \multicolumn{1}{c|}{526} & 822 \\ \hline\hline
  
\multirow{4}{*}{\textbf{ZFP(ABS)}} & \textbf{ABS} & \multicolumn{1}{c|}{Com} & Decom & \multicolumn{1}{c|}{Com} & Decom & \multicolumn{1}{c|}{Com} & Decom \\ \cline{2-8}
& 1E-2 & \multicolumn{1}{c|}{1383} & 1444 & \multicolumn{1}{c|}{492} & 634 & \multicolumn{1}{c|}{240} & 273 \\ \cline{2-8} 
 & 1E-3 & \multicolumn{1}{c|}{1082} & 1141 & \multicolumn{1}{c|}{307} & 397 & \multicolumn{1}{c|}{170} & 191 \\ \cline{2-8} 
 & 1E-4 & \multicolumn{1}{c|}{783} & 811 & \multicolumn{1}{c|}{170} & 209 & \multicolumn{1}{c|}{128} & 136 \\ \hline\hline
\multirow{4}{*}{\textbf{ZFP(FXR)}}& \textbf{FXR} & \multicolumn{1}{c|}{Com} & Decom & \multicolumn{1}{c|}{Com} & Decom & \multicolumn{1}{c|}{Com} & Decom \\ \cline{2-8} 
& 4 & \multicolumn{1}{c|}{610} & 601 & \multicolumn{1}{c|}{251} & 319 & \multicolumn{1}{c|}{335} & 397 \\ \cline{2-8} 
 & 8 & \multicolumn{1}{c|}{438} & 413 & \multicolumn{1}{c|}{129} & 142 & \multicolumn{1}{c|}{200} & 203 \\ \cline{2-8} 
 & 16 & \multicolumn{1}{c|}{324} & 311 & \multicolumn{1}{c|}{82} & 81 & \multicolumn{1}{c|}{119} & 112 \\ \hline
\end{tabular}%
}
\end{table}

\begin{table} [ht]
\centering
\caption{COMPRESSION RATIOS (ORIGINAL DATA SIZE / COMPRESSED DATA SIZE)}
\label{tab:COMPRESSION-RATIOS}
\resizebox{1\columnwidth}{!}{%
\begin{tabular}{|c|c|ccc|ccc|ccc|}
\hline
\textbf{Datasets} & \textbf{} & \multicolumn{3}{c|}{\textbf{RTM}} & \multicolumn{3}{c|}{\textbf{Hurricane}} & \multicolumn{3}{c|}{\textbf{CESM-ATM}} \\ \hline\hline
\multirow{4}{*}{\textbf{SZx}}  & \textbf{ABS} & \multicolumn{1}{c|}{min} & \multicolumn{1}{c|}{avg} & max & \multicolumn{1}{c|}{min} & \multicolumn{1}{c|}{avg} & max & \multicolumn{1}{c|}{min} & \multicolumn{1}{c|}{avg} & max \\ \cline{2-11}
 & 1E-2 & \multicolumn{1}{c|}{88} & \multicolumn{1}{c|}{116.3} & 124.1 & \multicolumn{1}{c|}{121} & \multicolumn{1}{c|}{123.1} & 124.1 & \multicolumn{1}{c|}{4.9} & \multicolumn{1}{c|}{8.5} & 22.8 \\ \cline{2-11} 
 & 1E-3 & \multicolumn{1}{c|}{26.1} & \multicolumn{1}{c|}{49.4} & 115.1 & \multicolumn{1}{c|}{14.9} & \multicolumn{1}{c|}{17.4} & 29.1 & \multicolumn{1}{c|}{3.3} & \multicolumn{1}{c|}{5.1} & 13.1 \\ \cline{2-11} 
 & 1E-4 & \multicolumn{1}{c|}{9.5} & \multicolumn{1}{c|}{30.4} & 111.3 & \multicolumn{1}{c|}{6.9} & \multicolumn{1}{c|}{7.2} & 8.8 & \multicolumn{1}{c|}{2.4} & \multicolumn{1}{c|}{3.4} & 8 \\ \hline\hline
\multirow{4}{*}{\textbf{ZFP(ABS)}}  & \textbf{ABS} & \multicolumn{1}{c|}{min} & \multicolumn{1}{c|}{avg} & max & \multicolumn{1}{c|}{min} & \multicolumn{1}{c|}{avg} & max & \multicolumn{1}{c|}{min} & \multicolumn{1}{c|}{avg} & max \\ \cline{2-11} 
& 1E-2 & \multicolumn{1}{c|}{67.7} & \multicolumn{1}{c|}{87.6} & 125.2 & \multicolumn{1}{c|}{18.1} & \multicolumn{1}{c|}{18.3} & 18.6 & \multicolumn{1}{c|}{4.7} & \multicolumn{1}{c|}{8.1} & 23.1 \\ \cline{2-11} 
 & 1E-3 & \multicolumn{1}{c|}{30.6} & \multicolumn{1}{c|}{58.4} & 123.2 & \multicolumn{1}{c|}{10.5} & \multicolumn{1}{c|}{10.7} & 11 & \multicolumn{1}{c|}{3.4} & \multicolumn{1}{c|}{5.6} & 15 \\ \cline{2-11} 
 & 1E-4 & \multicolumn{1}{c|}{13.4} & \multicolumn{1}{c|}{38} & 120.1 & \multicolumn{1}{c|}{5.4} & \multicolumn{1}{c|}{5.5} & 5.8 & \multicolumn{1}{c|}{2.4} & \multicolumn{1}{c|}{3.8} & 9.7 \\ \hline\hline
\multirow{4}{*}{\textbf{ZFP(FXR)}}  & \textbf{FXR} & \multicolumn{1}{c|}{min} & \multicolumn{1}{c|}{avg} & max & \multicolumn{1}{c|}{min} & \multicolumn{1}{c|}{avg} & max & \multicolumn{1}{c|}{min} & \multicolumn{1}{c|}{avg} & max \\ \cline{2-11}
& 4 & \multicolumn{1}{c|}{8} & \multicolumn{1}{c|}{8} & 8 & \multicolumn{1}{c|}{8} & \multicolumn{1}{c|}{8} & 8 & \multicolumn{1}{c|}{8} & \multicolumn{1}{c|}{8} & 8 \\ \cline{2-11} 
 & 8 & \multicolumn{1}{c|}{4} & \multicolumn{1}{c|}{4} & 4 & \multicolumn{1}{c|}{4} & \multicolumn{1}{c|}{4} & 4 & \multicolumn{1}{c|}{4} & \multicolumn{1}{c|}{4} & 4 \\ \cline{2-11} 
 & 16 & \multicolumn{1}{c|}{2} & \multicolumn{1}{c|}{2} & 2 & \multicolumn{1}{c|}{2} & \multicolumn{1}{c|}{2} & 2 & \multicolumn{1}{c|}{2} & \multicolumn{1}{c|}{2} & 2 \\ \hline
\end{tabular}%
}
\end{table}
\begin{table} [ht]
\centering
\caption{COMPRESSION QUALITIES (PSNR)}
\label{tab:compression-quality}
\resizebox{1\columnwidth}{!}{%
\begin{tabular}{|c|c|ccc|ccc|ccc|}
\hline
\textbf{Datasets} & \textbf{} & \multicolumn{3}{c|}{\textbf{RTM}} & \multicolumn{3}{c|}{\textbf{Hurricane}} & \multicolumn{3}{c|}{\textbf{CESM-ATM}} \\ \hline\hline
\multirow{4}{*}{\textbf{SZx}}  & \textbf{ABS} & \multicolumn{1}{c|}{min} & \multicolumn{1}{c|}{avg} & max & \multicolumn{1}{c|}{min} & \multicolumn{1}{c|}{avg} & max & \multicolumn{1}{c|}{min} & \multicolumn{1}{c|}{avg} & max \\ \cline{2-11}
& 1E-2 & \multicolumn{1}{c|}{31.2} & \multicolumn{1}{c|}{37.3} & 61.5 & \multicolumn{1}{c|}{25.2} & \multicolumn{1}{c|}{26.2} & 26.8 & \multicolumn{1}{c|}{46} & \multicolumn{1}{c|}{50.6} & 55.9 \\ \cline{2-11} 
 & 1E-3 & \multicolumn{1}{c|}{40.6} & \multicolumn{1}{c|}{51.4} & 80.9 & \multicolumn{1}{c|}{40.3} & \multicolumn{1}{c|}{42} & 42.8 & \multicolumn{1}{c|}{64.2} & \multicolumn{1}{c|}{70.3} & 73.4 \\ \cline{2-11} 
 & 1E-4 & \multicolumn{1}{c|}{60.2} & \multicolumn{1}{c|}{72} & 102.6 & \multicolumn{1}{c|}{62.4} & \multicolumn{1}{c|}{63.6} & 64.3 & \multicolumn{1}{c|}{82.8} & \multicolumn{1}{c|}{93.5} & 97.2 \\ \hline\hline
\multirow{4}{*}{\textbf{ZFP(ABS)}}  & \textbf{ABS} & \multicolumn{1}{c|}{min} & \multicolumn{1}{c|}{avg} & max & \multicolumn{1}{c|}{min} & \multicolumn{1}{c|}{avg} & max & \multicolumn{1}{c|}{min} & \multicolumn{1}{c|}{avg} & max \\ \cline{2-11} 
& 1E-2 & \multicolumn{1}{c|}{36.6} & \multicolumn{1}{c|}{45.7} & 71.8 & \multicolumn{1}{c|}{33} & \multicolumn{1}{c|}{34} & 34.3 & \multicolumn{1}{c|}{52.6} & \multicolumn{1}{c|}{56.8} & 61 \\ \cline{2-11} 
 & 1E-3 & \multicolumn{1}{c|}{49.2} & \multicolumn{1}{c|}{59.5} & 88.2 & \multicolumn{1}{c|}{47.8} & \multicolumn{1}{c|}{49} & 50 & \multicolumn{1}{c|}{69.1} & \multicolumn{1}{c|}{73.2} & 77.6 \\ \cline{2-11} 
 & 1E-4 & \multicolumn{1}{c|}{69.3} & \multicolumn{1}{c|}{80.1} & 111 & \multicolumn{1}{c|}{68.5} & \multicolumn{1}{c|}{69.7} & 70.4 & \multicolumn{1}{c|}{92.2} & \multicolumn{1}{c|}{96.6} & 100.6 \\ \hline\hline
\multirow{4}{*}{\textbf{ZFP(FXR)}} \textbf{} & \textbf{FXR} & \multicolumn{1}{c|}{min} & \multicolumn{1}{c|}{avg} & max & \multicolumn{1}{c|}{min} & \multicolumn{1}{c|}{avg} & max & \multicolumn{1}{c|}{min} & \multicolumn{1}{c|}{avg} & max \\ \cline{2-11}
& 4 & \multicolumn{1}{c|}{41.9} & \multicolumn{1}{c|}{46.4} & 56.1 & \multicolumn{1}{c|}{30.9} & \multicolumn{1}{c|}{31.6} & 32.3 & \multicolumn{1}{c|}{30.3} & \multicolumn{1}{c|}{34} & 39.9 \\ \cline{2-11} 
 & 8 & \multicolumn{1}{c|}{66.7} & \multicolumn{1}{c|}{71.6} & 80.1 & \multicolumn{1}{c|}{56} & \multicolumn{1}{c|}{57.8} & 64.9 & \multicolumn{1}{c|}{58.8} & \multicolumn{1}{c|}{60.5} & 62.9 \\ \cline{2-11} 
 & 16 & \multicolumn{1}{c|}{113.8} & \multicolumn{1}{c|}{118.5} & 127.3 & \multicolumn{1}{c|}{103.7} & \multicolumn{1}{c|}{105.5} & 113 & \multicolumn{1}{c|}{106.4} & \multicolumn{1}{c|}{108.1} & 110.6 \\ \hline
\end{tabular}%
}
\end{table}

In addition to the above analysis, we also compare SZx, ZFP(ABS), and ZFP(FXR) regarding the compression throughput, ratio, and quality, using different datasets with various error bounds or rates, as shown in Table \ref{tab:compression-speed}, \ref{tab:COMPRESSION-RATIOS}, and \ref{tab:compression-quality}. To ensure a fair evaluation, all compression and decompression processes were executed using a single thread on an Intel Xeon E5-2695v4 CPU. We adopt the 1D compression mode in that the dimensional information will have to be skipped due to the 1D chunk-wise design in most of the MPI collectives. The information about the datasets we used here is detailed in Table \ref{tab:datasets}, which will be found in Section \ref{sec:setup}. Because of the space limit, we use the QVAPORf and CLOUD fields in the Hurricane and CESM-ATM dataset, respectively, in our experiments, and other fields exhibit very similar results. We observe that SZx is much faster than ZFP(ABS) by up to 4.1$\times$ in compression and 5.7$\times$ in decompression with similar compression ratios and qualities. Additionally, with similar compression qualities, ZFP(FXR) has the lowest compression speed and compression ratio compared to both ZFP(ABS) and SZx. Therefore, we develop our customized compressor based on SZx in terms of the context of MPI collectives. For the purpose of comparison, we also implement compression-enabled point-to-point communication-based collectives based on both ZFP(FXR) and ZFP(ABS), which serve as baselines.

\subsection{Characterization of Performance Bottlenecks}
\label{sec:Characterization}

We integrate SZx into point-to-point communication of the ring-based allreduce algorithm, in order to understand the key bottlenecks of the collective performance, which will be a fundamental work to guide our optimization strategies. In this characterization, we use ring-based allreduce which serves as a very good example, because it consists of both collective data movement (allgather) and collective computation (reduce\_scatter). Figure \ref{fig-original-naive} presents the detailed performance breakdown for direct integration of SZx under a series of data sizes. Our analysis reveals that in the original ring-based allreduce algorithm, the all-gather operation accounts for approximately 60\% of the overall execution time, as the communications are not overlapped at all with each other, unlike the reduce-scatter stage. The Wait operation is the second most time-consuming one, which waits for the completion of non-blocking send and receives before conducting the reduction operation. The remaining operations in the original allreduce algorithm include Memcpy (local data copies), Reduction (Reduce operations), and Others (data allocations and other calculations), which together constitute about 20\% of the overall execution time. After integrating the SZx, the bottleneck turns into the ComDecom (compression and decompression) as the data needs to be compressed before being sent out and needs to be decompressed every time the receiver receives them. We also observe a reduction in Allgather and Wait times, suggesting a decrease in transferred data, and both of the MPI-related time can be further optimized. However, the Others part also takes a significant amount, specifically 23\% in the 278MB case. This is because the SZx requires users to free compression-generated buffers after the compressor is called, resulting in a significant overhead.


\begin{figure}[ht]
    \centering
    {\includegraphics[width=0.6\linewidth]{./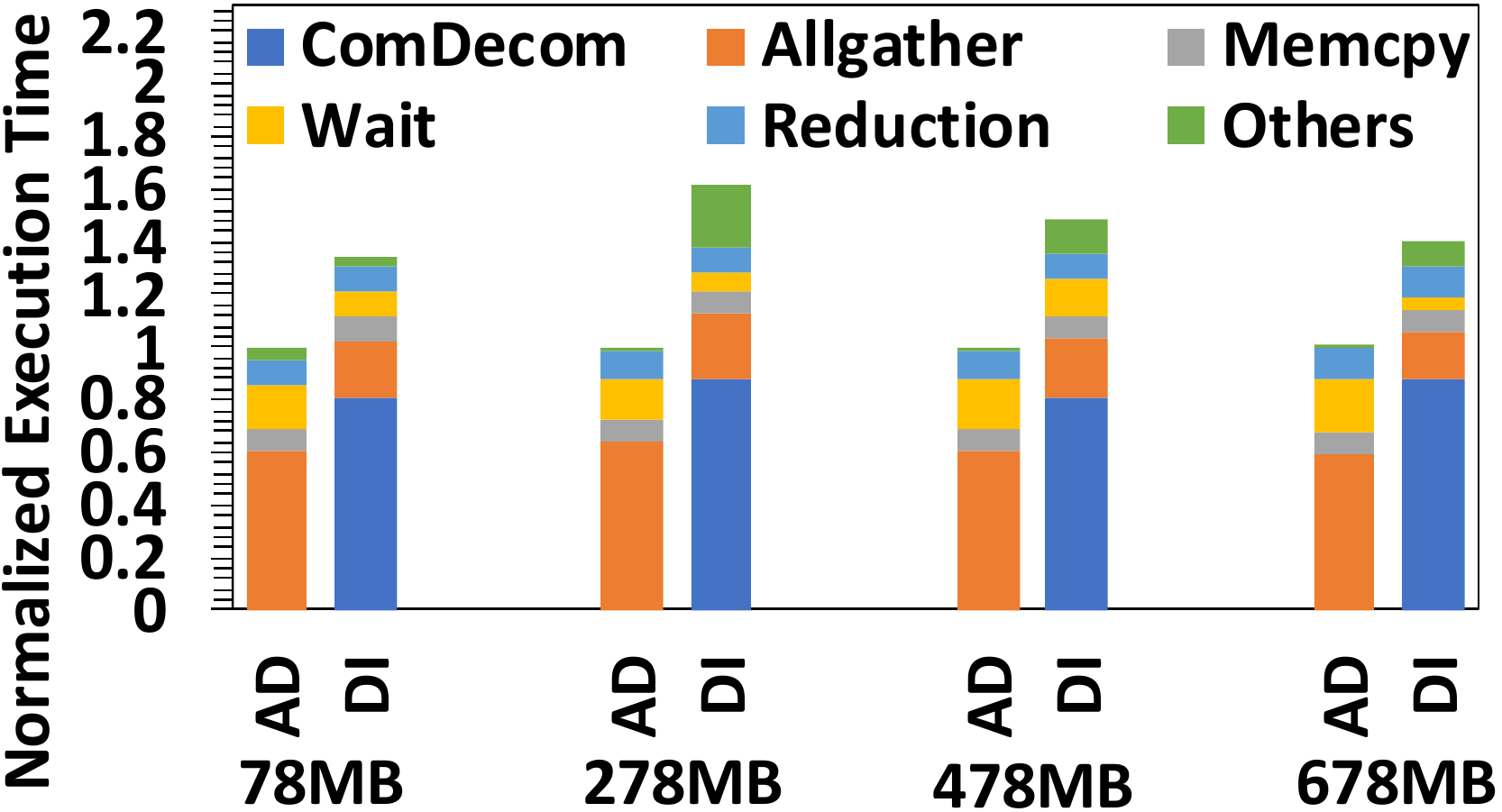}}
    \caption{Compare the performance of AD(Original MPI\_Allreduce) and the DI(Direct Integration) of SZx from 78MB to 678MB with a 200MB step.} 
    \label{fig-original-naive}
\end{figure}

\subsection{Step-wise Optimizations}

In this subsection, we detail our step-wise optimization strategies, a principal contribution of this paper. To facilitate explanation, we present our implementation of a ring-based allreduce algorithm that incorporates lossy compression. These optimization strategies are applicable to other collective operations as well. Notably, in the ring-based allreduce, the data transfer required for each process is just $\frac{2(N-1)}{N}$$\cdot$$D_{input}$, where $D_{input}$ represents the input data size and $N$ is the number of processes. Thus, this design is highly efficient for processing long messages. Furthermore, our integration of lossy compression significantly enhances the efficiency of collective communication involving long messages. We have termed our lossy-compression-enabled MPI\_Allreduce as C-Allreduce, with `C' denoting compression. Subsequently, we will discuss the design and implementation specifics of C-Allreduce in a structured, step-by-step approach.

\subsubsection{\textbf{Utilize our collective data movement framework}}
\label{sec:Utilize-data-move}
To reduce the compression overhead and balance communications, we utilize the data movement framework that we presented in Section \ref{sec-data-movement-framework}. At the beginning, every process compresses its local data and stores the compressed data size. Then, every process synchronizes with each other to collect the compressed data sizes in a local integer array $compressed\_sizes$. As the compressed data size only has four bytes, this step is very fast. After that, all processes get the sum of all the compressed data sizes, noted as $total\_count$. Then, each process communicates with each other with a fixed pipeline size until every process has sent $to\_send = total\_count - compressed\_sizes[send\_rank]$ and received $to\_recv = total\_count - compressed\_sizes[self\_rank]$ from other processes in a ring communication pattern. After all communications end, every process starts to decompress all the received compressed data and store the decompressed data in the receive buffer. Note that they do not need to decompress the data that are compressed by themselves. After this step, we can significantly decrease the time spent by compression and allgather communications compared with the direct integration of SZx. Besides, our solution can also preserve the quality/accuracy of the data very well because of the error-bounding feature, which will be demonstrated later in Section \ref{sec-image-stacking}.

\subsubsection{\textbf{Customize SZx to reduce communication overhead with our collective computation framework}}
\label{sec:Customize-SZx-reduce}
In order to use our collective computation framework in the reduce-scatter stage of the ring-based allreduce algorithm, we need to redesign the compression workflow of SZx so that we can consistently poll the progress of the Isend and Irecv inside of the compression and decompression. Therefore, we design and implement the PIPE-SZx (pipelined SZx) based on the original SZx. Instead of compressing the original data as a whole, we divide the compression process into small chunks, each of which handles 5120 data points. Between the compression of two adjacent chunks, we actively poll the communication progress of the non-blocking receive. However, the compressed data of each chunk cannot be simply combined together, otherwise the compressed data cannot be correctly decompressed because each compressed chunk is of variable uncertain length. To solve this problem, we decide to store the compressed data of all chunks in the same output buffer and pre-allocate enough memory space (four bytes per chunk, small memory consumption) at the front of the buffer for storing the compressed data sizes of those chunks together (essentially a kind of index), instead of storing them along with the compressed data chunks. Such a design is more cache-friendly, thus having lower overhead. During the decompression, we maintain a chunk-starting-location pointer based on the recorded compressed chunk sizes to tell the algorithm where the decompression operation should start for each chunk. 
We repeat this process chunk by chunk and poll the progress of the non-blocking send between decompression chunks. Through this optimization, we can hide the communication in the reduce-scatter stage inside of compression, which further improves the performance of our C-Allreduce design.

\section{Experimental Evaluation}\label{exp-setup-sec}

In this section, we present and discuss the evaluation results.

\subsection{Experimental Setup}
\label{sec:setup}

Since inter-node communication is the major bottleneck for collectives as discussed previously, we utilized a 128-node cluster with one process per node in our experiments. Each node is equipped with two Intel Xeon E5-2695v4 Broadwell processors. Furthermore, each NUMA node contains 64 GB of DDR4 memory, resulting in a total of 128 GB of memory per node. The nodes are interconnected via Intel Omni-Path Architecture (OPA), providing a maximum message rate of 97 million per second and a bandwidth of 100 Gbps.

\begin{table}[ht]
\centering
\caption{Information of The Scientific Datasets}
\resizebox{1\columnwidth}{!}{%
\begin{tabular}{|c|c|c|c|}
\hline
{\textbf{Applications}} & {\textbf{\# files}} & {\textbf{Dimensions}} & {\textbf{Descriptions}} \\ \hline\hline
{RTM\cite{Kayum2020RTM}} & \multirow{1}{*}{70} & \multirow{1}{*}{849$\times$849$\times$235} & \multirow{1}{*}{Seismic Wave} \\ \hline
\multirow{1}{*}{Hurricane\cite{hurricane2004}} & \multirow{1}{*}{48$\times$13} & \multirow{1}{*}{100$\times$500$\times$500} & \multirow{1}{*}{Weather Simulation} \\\hline
\multirow{1}{*}{CESM-ATM\cite{Zhao2020SDRBench}} & \multirow{1}{*}{26$\times$33} & \multirow{1}{*}{1800$\times$3600} & \multirow{1}{*}{Climate Simulation}\\ \hline
\end{tabular}%
}
\label{tab:datasets}
\end{table}

MPI collectives are common operations used in the simulation analysis. For instance, generating stacking images in RTM (essentially an allreduce sum operation) is a typical real-world example \cite{Gurhem2021Kirchhoff}, which will be demonstrated at the end of this section. We also utilize various datasets from a series of applications, including the reverse time migration (RTM) dataset, Hurricane dataset, and CESM-ATM dataset, to evaluate our solution. 
Table \ref{tab:datasets} shows the detailed specifications of these datasets. Our baselines consist of original MPI collectives in MPICH 4.1.1, MPI collectives implemented by compression-enabled point-to-point communications with SZx, fix-accuracy ZFP, and fix-rate ZFP (version 0.5.5). For our experiments, we adopt a two-stage approach, including a warm-up stage and an execution stage. We conduct 10 runs for each stage and report the average results to present the general performance.

\subsection{Step-wise Optimizations to C-Allreduce with Performance Analysis}
\label{sec:evaluation}
In this section, we carry out optimizations to our C-Allreduce (\textit{C-Coll} enhanced Allreduce) integrated with SZx step by step and show the performance on 16 Broadwell nodes. The ring-based allreduce that we implement contains a reduce-scatter stage and an allgather stage. Thus, we breakdown the total execution time into several parts: ComDecom (i.e., the time needed to compress and decompress data), Allgather (i.e., the time required to transfer data in the allgather stage), Memcpy (i.e., the time spent on coping data in the reduce-scatter stage), Wait (i.e., the non-overlapped time spent on transferring data in the reduce-scatter stage), Reduction (i.e., the time required to do reduction operation), and Others (i.e., the time needed to do other data allocation and calculations). We adopt the RTM dataset to demonstrate our step-wise optimizations, which is the largest one among all three representative scientific datasets we evaluate. All the variants benchmarked in the following contexts are summarized in Table \ref{tab:different-optimizations}.

\begin{table}[]
\centering
\caption{Step-wise Optimizations of C-Allreduce}
\label{tab:different-optimizations}
\resizebox{\columnwidth}{!}{%
\begin{tabular}{|l|l|}
\hline
\textbf{Method (Abbr.)} & \textbf{Description about the Allreduce implementation} \\ \hline \hline
Original MPI\_Allreduce (AD) &  No compression \\ \hline
Direct Integration (DI) & Implemented with CPRP2P \\ \hline
Novel Design (ND) & Optimized by our collective data movement framework \\ \hline
Overlapped Optimization (Overlap) & Optimized by our collective computation framework \\ \hline
\end{tabular}%
}
\end{table}

\subsubsection{\textbf{Evaluating our collective data movement framework}}
\label{sec:eval-framework}

Figure \ref{fig-new-design} shows the performance improvement achieved by our novel design in the allgather stage for data sizes ranging from 28MB to 678MB. This Novel Design (ND) leverages our collective data movement framework, resulting in a considerable reduction in both compression and decompression time. Notably, at the data size of 128MB, the ND achieves a speed-up of up to 1.48$\times$ when compared to the Direct Integration (DI) approach discussed in Section \ref{sec:Characterization}. Additionally, our balanced communication in Allgather using the ND is up to 7.1$\times$ faster than the unbalanced communication in DI at 628MB. We analyze the key reason why our solution can obtain a significant performance improvement as follows. In fact, to overcome the compression bottleneck and balance MPI communications, we utilize our collective data movement framework that pre-compresses the data before transmission and decompresses it after all communications, rather than using expensive compression-enabled point-to-point communication (CPR-P2P) in collective routines. This novel design can significantly reduce the amount of compression required during collective communication. Using CPR-P2P also brings unbalanced communications as the compressed data sizes may vary, but we can balance the communications with a fixed pipeline size in our new design because we do not need to compress the data every time before we send it. 

Note that using CPR-P2P may accumulate errors during intensive collective communication, such as ring-based communication, as the same data is repeatedly passed from one process to another. Therefore, we have utilized our new framework to ensure that errors in the final results are bounded, which we discuss in the application evaluation section \ref{sec-image-stacking}.
\begin{figure}[ht]
    \centering
    {\includegraphics[width=0.95\linewidth]{./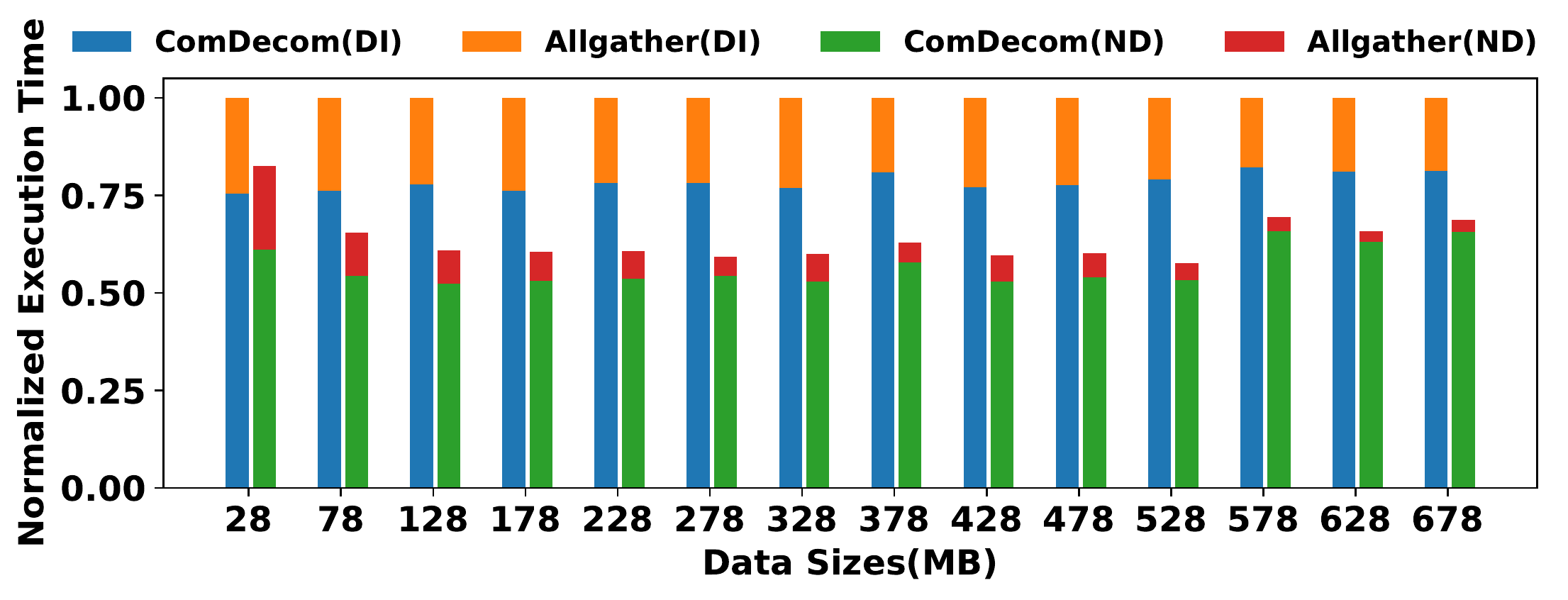}}
    \caption{Compare the performance of DI (Direct Integration) and our ND (Novel Design) from 28MB to 678MB.} 
    \label{fig-new-design}
\end{figure}

\subsubsection{\textbf{Evaluating reduced communication overhead with our collective computation framework}}
We demonstrate the effectiveness of our collective computation framework in this section. From Figure \ref{fig-Overlap}, it is evident that the carefully overlapped version (Overlap) leads to significantly less Wait time compared with the previous version (ND), resulting in a performance boost of up to 4.9$\times$ for the data size of 678MB. The rationale for this performance improvement is shown in the following text. To utilize our collective computation framework for hiding communication during compression, we design and implement PIPE-SZx (pipelined SZx), which could break the compression process into small chunks and allow us to overlap the compression with communication in a fine-grained pipelined manner. As a result, we can significantly reduce MPI\_Wait time in the reduce-scatter stage, which uses non-blocking communications.

\begin{figure}[ht]
    \centering
    {\includegraphics[width=0.8\linewidth]{./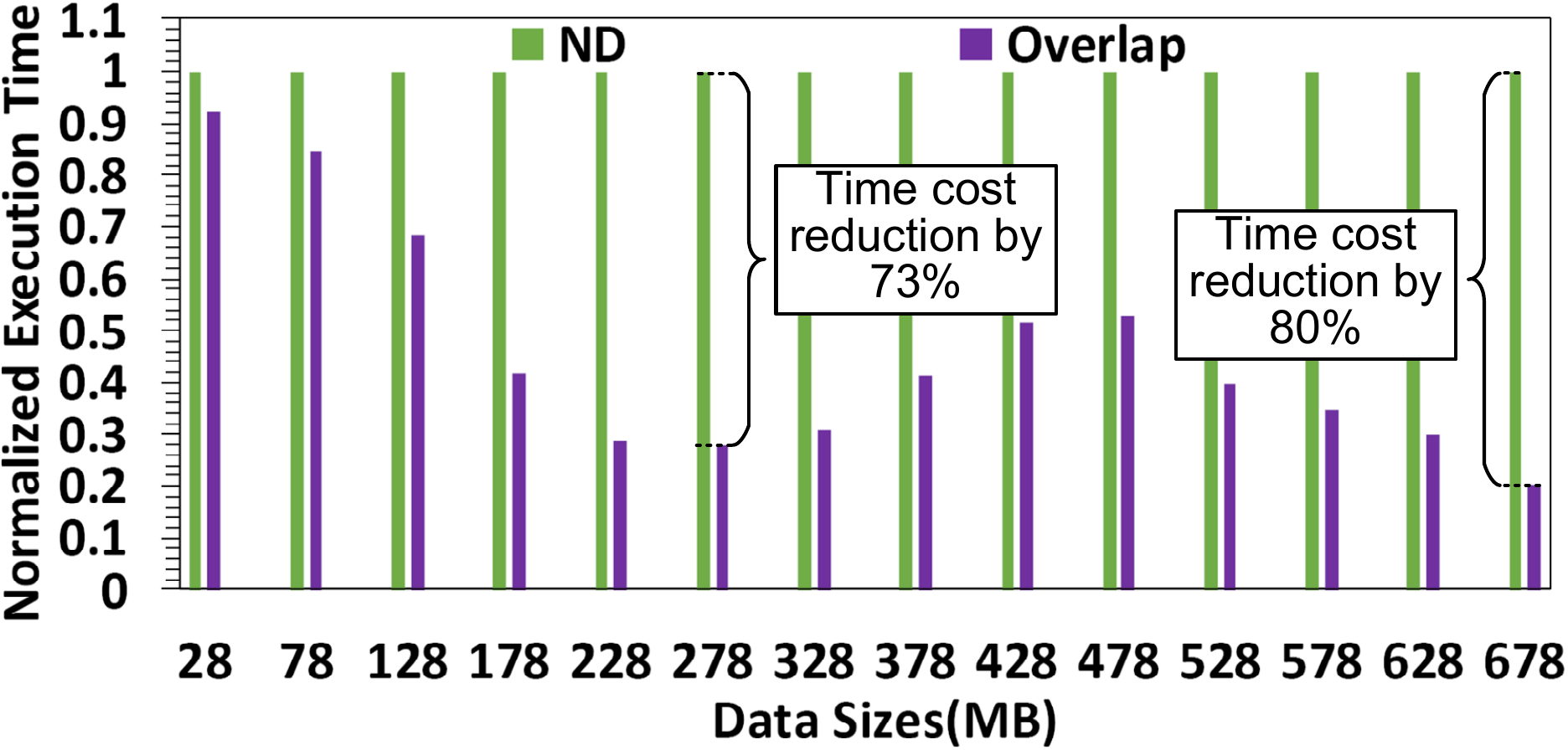}}
    \caption{Compare the wait time of ND (Novel Design) and Overlap (Overlap Optimization) from 28MB to 678MB with a 50MB step.} 
    \label{fig-Overlap}
\end{figure}

\subsubsection{\textbf{Evaluating overall running time of different optimizations}}
In Figure \ref{fig-step-wise-overall}, we compare the end-to-end execution time of our compression-enabled Allreduce with step-wise optimizations against the Original MPI\_Allreduce (AD). We can see that the end-to-end performance of our compression-enabled Allreduce gradually increase with these optimizations and it consistently beats the AD across various message sizes with our Overlapped Optimization (Overlap). We call the fully optimized variant as C-Allreduce and will thoroughly evaluate both the performance and accuracy of it in the following sections.

\begin{figure}[ht]
    \centering
    {\includegraphics[width=0.8\linewidth]{./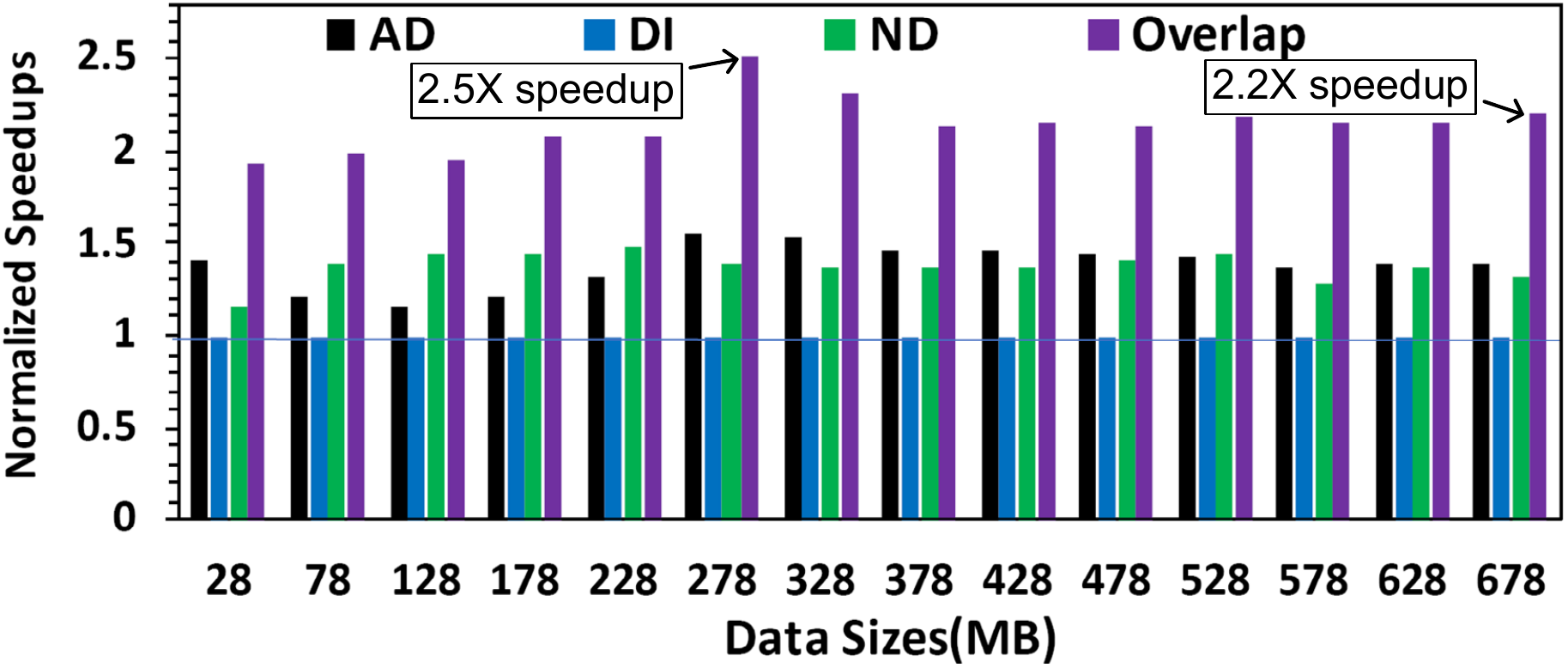}}
    \caption{Compare the overall running time of different optimizations from 28MB to 678MB with a 50MB step.}
    \label{fig-step-wise-overall}
\end{figure}

\subsection{End-to-end Comparisons of C-Allreduce with Baselines}
In this section, we compare the performance of our C-Allreduce with four different baselines on various data sizes, node numbers, and datasets. The baselines include the original MPI\_Allreduce without compression (Allreduce), MPI\_Allreduce implemented by compression-enabled point-to-point communication (CPR-P2P) with fixed-rate ZFP (ZFP(FXR)), fixed-accuracy ZFP (ZFP(ABS)), and SZx. The SZx and ZFP(ABS) have a fixed accuracy of 1E-3, and the ZFP(FXR) has a rate of 4, if not specifically mentioned.

\begin{figure}[ht]
    \centering
    {\includegraphics[width=0.8\linewidth]{./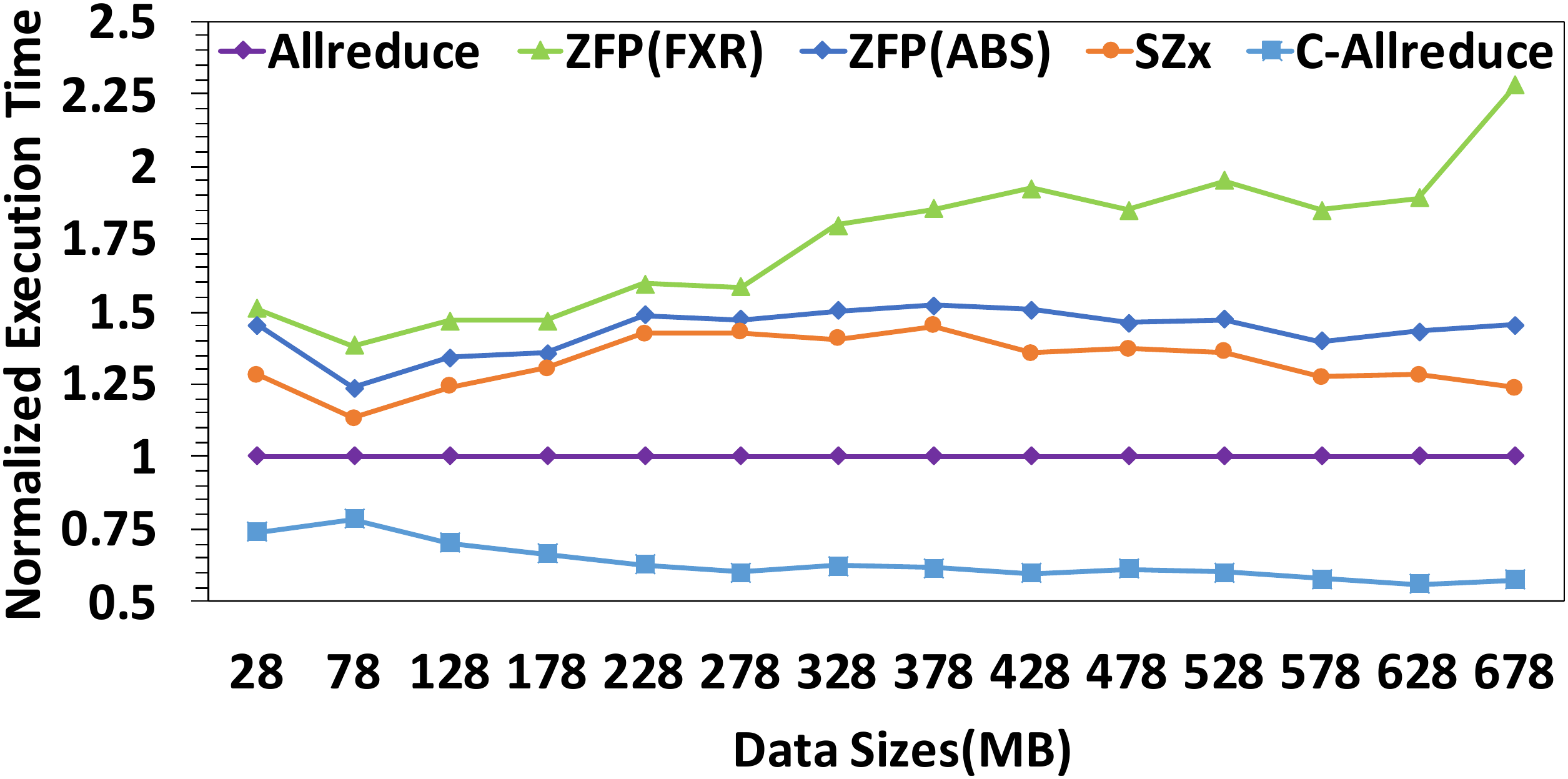}}
    \caption{Compare the performance of our C-Allreduce and multiple baselines from 28MB to 678MB with a 50MB step on 128 nodes.} 
    \label{fig-128-sizes}
\end{figure}

\subsubsection{\textbf{Evaluating with different data sizes on 128 nodes}}\label{sec-128-sizes}
In this section, we present the performance of our C-Allreduce, along with related baselines, using data sizes ranging from 28MB to 678MB on 128 Broadwell nodes. The RTM dataset is adopted in these experiments. From Figure \ref{fig-128-sizes}, we can observe that the SZx-integrated baseline performs the best among the three compression-integrated baselines, and ZFP(FXR) has the worst performance. This is because SZx has a faster compression speed compared to ZFP(FXR) and ZFP(ABS). Additionally, ZFP(ABS) and SZx have better compression ratios than ZFP(FXR) as we previously mentioned in \ref{sec:high-speed-compressor}. However, none of the three CPR-P2P baselines can outperform the original Allreduce. By employing our novel framework and step-wise optimizations, our C-Allreduce achieves significantly higher performance than the original Allreduce with up to 1.8$\times$ performance improvement.

\begin{figure}[ht]
    \centering
    {\includegraphics[width=0.8\linewidth]{./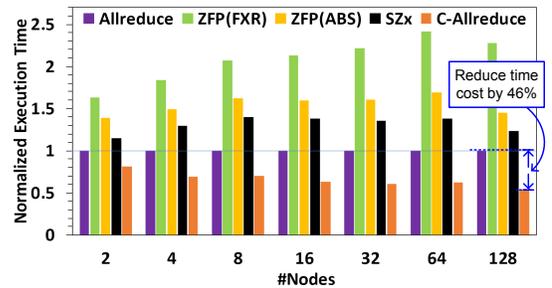}}
    \caption{Compare the performance of our C-Allreduce and multiple baselines from 2 to 128 nodes with 678MB message.} 
    \label{fig-128-nodes}
\end{figure}

\subsubsection{\textbf{Evaluating with different node numbers with 678MB data}}
To demonstrate the scalability of our approach, we compare the normalized execution time of our C-Allreduce and four different baselines using a fixed data size of 678MB across 2 to 128 nodes. The RTM dataset is adopted in these experiments. As shown in Figure \ref{fig-128-nodes}, our C-Allreduce outperforms all the baselines across various node numbers. It can reach a performance boost of up to 1.8$\times$ compared to the original Allreduce. Similar to our observation in Section \ref{sec-128-sizes}, we found that all compressor-integrated baselines exhibit performance degradation compared to Allreduce, with the SZx-integrated baseline performing the fastest among them.

\subsubsection{\textbf{Evaluating with different application datasets}}

\begin{table}[ht]
\centering
\caption{Compression Ratios}
\label{tab:SZx-diiff-data}
\resizebox{0.8\columnwidth}{!}{%
\begin{tabular}{|c|cccc|c|}
\hline
\textbf{Datasets} & \multicolumn{3}{c|}{\textbf{Hurricane}} & \multicolumn{1}{l|}{\textbf{CESM-ATM}} \\ \hline\hline
\textbf{Fields} & \multicolumn{1}{c|}{PRECIPf} & \multicolumn{1}{c|}{QGRAUPf} & \multicolumn{1}{c|}{CLOUDf} & Q \\ \hline
\textbf{CPR} & \multicolumn{1}{c|}{33.8} & \multicolumn{1}{c|}{58.3} & \multicolumn{1}{c|}{39.9} & 79.1 \\ \hline
\end{tabular}%
}
\end{table}

In this section, we evaluate the performance of our C-Allreduce and related baselines on different datasets using the error bound of 1E-4. Due to the page limitation, we only show the SZx baseline here as it has the best performance among compression-enabled baselines. The PRECIPf, QGRAUPf, and CLOUDf fields are from the Hurricane application dataset and the Q field is originated from the CESM-ATM application dataset. They have different data distributions and averaged compression ratios as shown in Table~\ref{tab:SZx-diiff-data}. From Figure~\ref{fig-diff-datasets}, we can notice that our C-Allreduce has the best performance among all the implementations. Specifically, 1.74$\times$ and 1.58$\times$ speedups can be achieved in QGRAUPf and PRECIPf fields, respectively. Such consistent high performance of C-Allreduce across varied data fields highlights its efficiency and adaptability. On the contrary, the fastest compression-enabled baseline (SZx) still has performance degradation compared with the original Allreduce due to the limited performance of the CPR-P2P in all cases.
\begin{figure}[ht]
    \centering
    {\includegraphics[width=0.7\linewidth]{./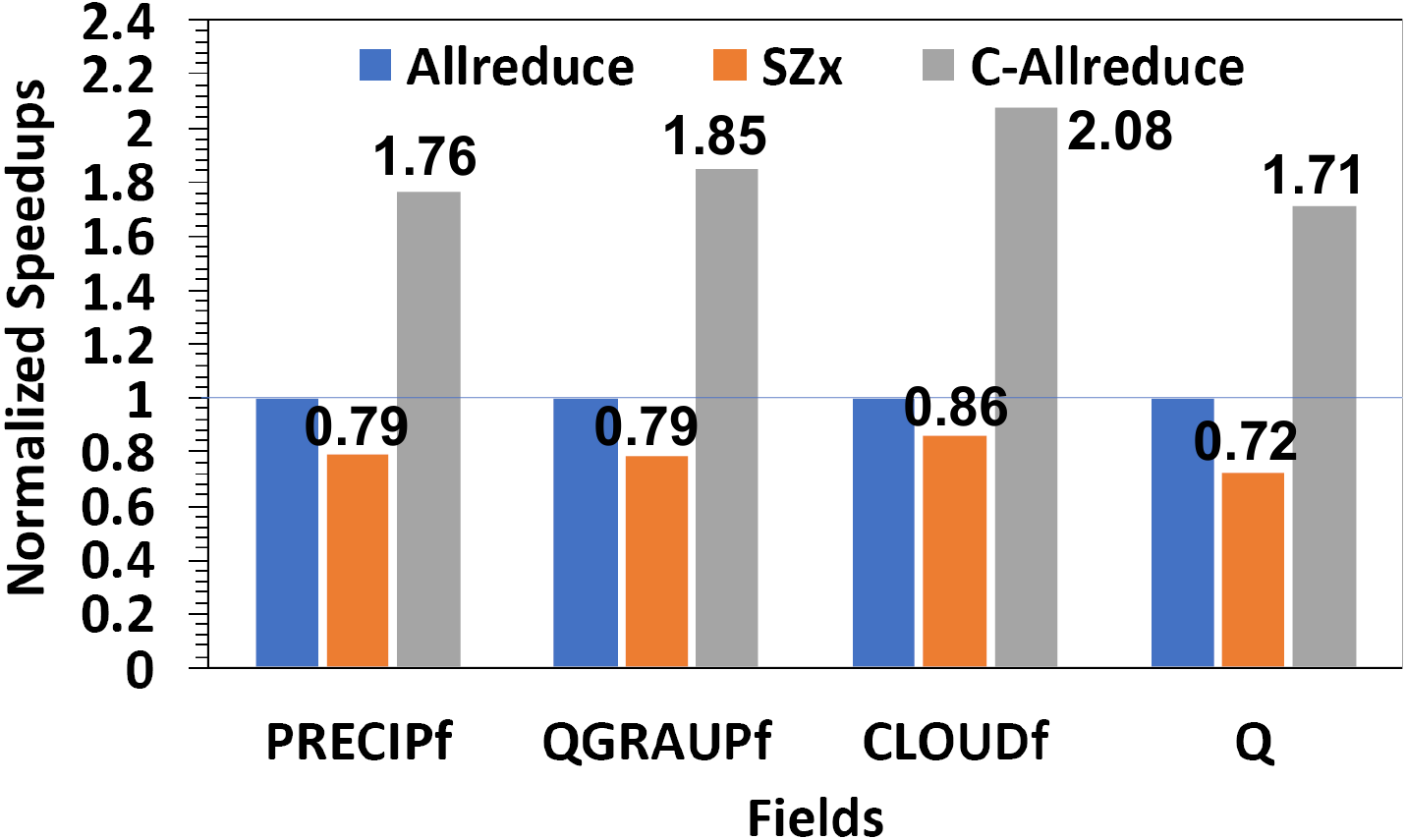}}
    \caption{Compare the performance of our C-Allreduce and multiple baselines in different application datasets.} 
    \label{fig-diff-datasets}
\end{figure}

Apart from the performance, we also evaluate the accuracy using visualization and numerical metrics including the widely-used peak signal-to-noise ratio (PSNR) \cite{PSNR} and normalized root mean squared error (NRMSE) \cite{shcherbakov2013errormetrics}. Figure \ref{fig-hurricane-quality} represents the visual and numerical evaluation of our C-Allreduce with the Hurricane application dataset. The excellent image quality, along with the great PSNR(60.04) and NRMSE(1E-3) demonstrate that our C-Allreduce delivers a well-controlled accuracy. The same phenomenon can be witnessed in the Figure \ref{fig-CESM-quality}, where we evaluate the accuracy of C-Allreduce with the CESM-ATM application dataset. Thus, we can conclude that the selected error bound is suitable for other applications like Hurricane and CESM-ATM.

\begin{figure}[ht]
    \centering
    \subfloat[Allreduce w/o compression]{\includegraphics[width=.4\linewidth]{./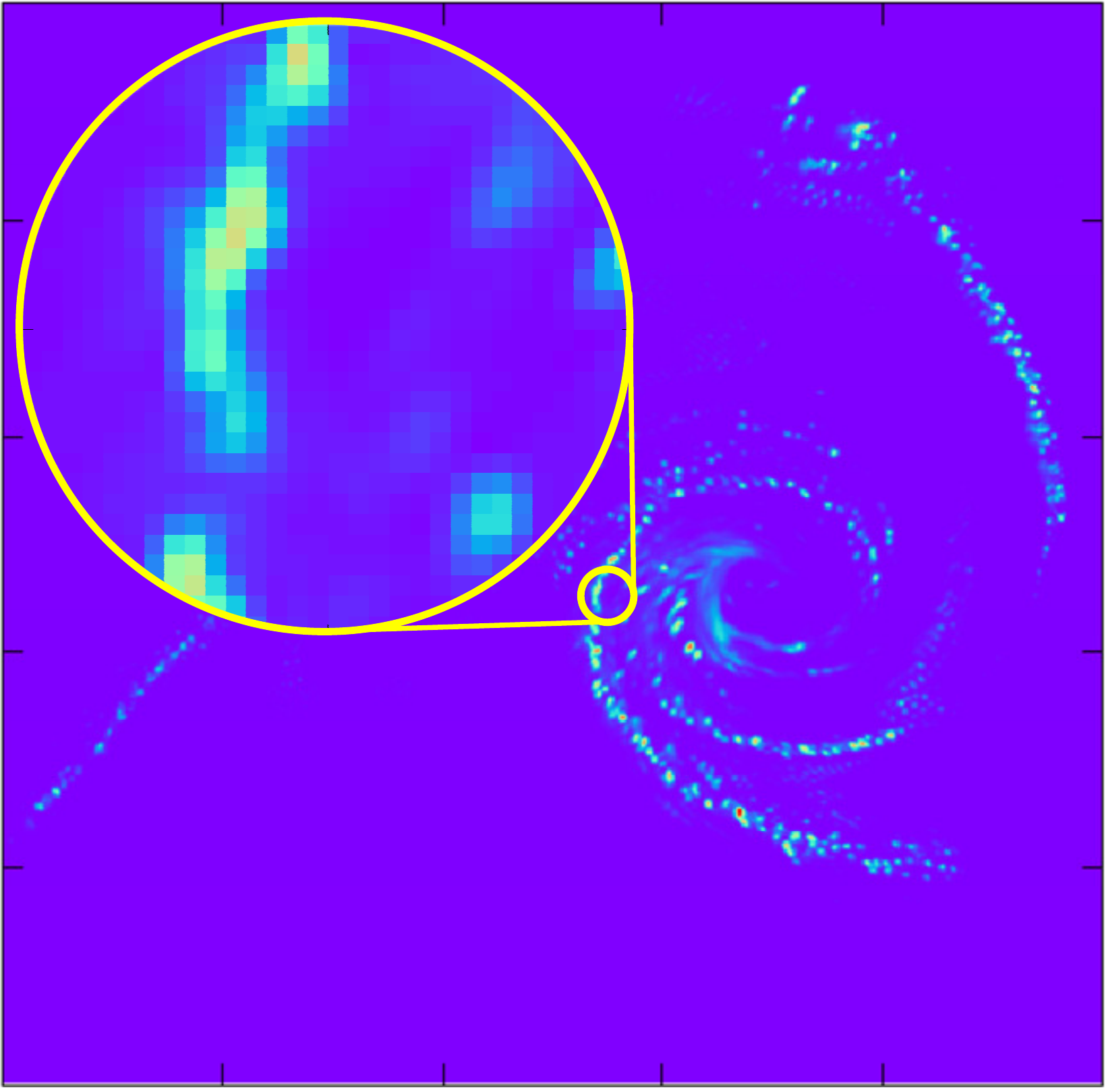}}\hspace{3mm}
    \subfloat[C-Allreduce]{\includegraphics[width=.403\linewidth]{./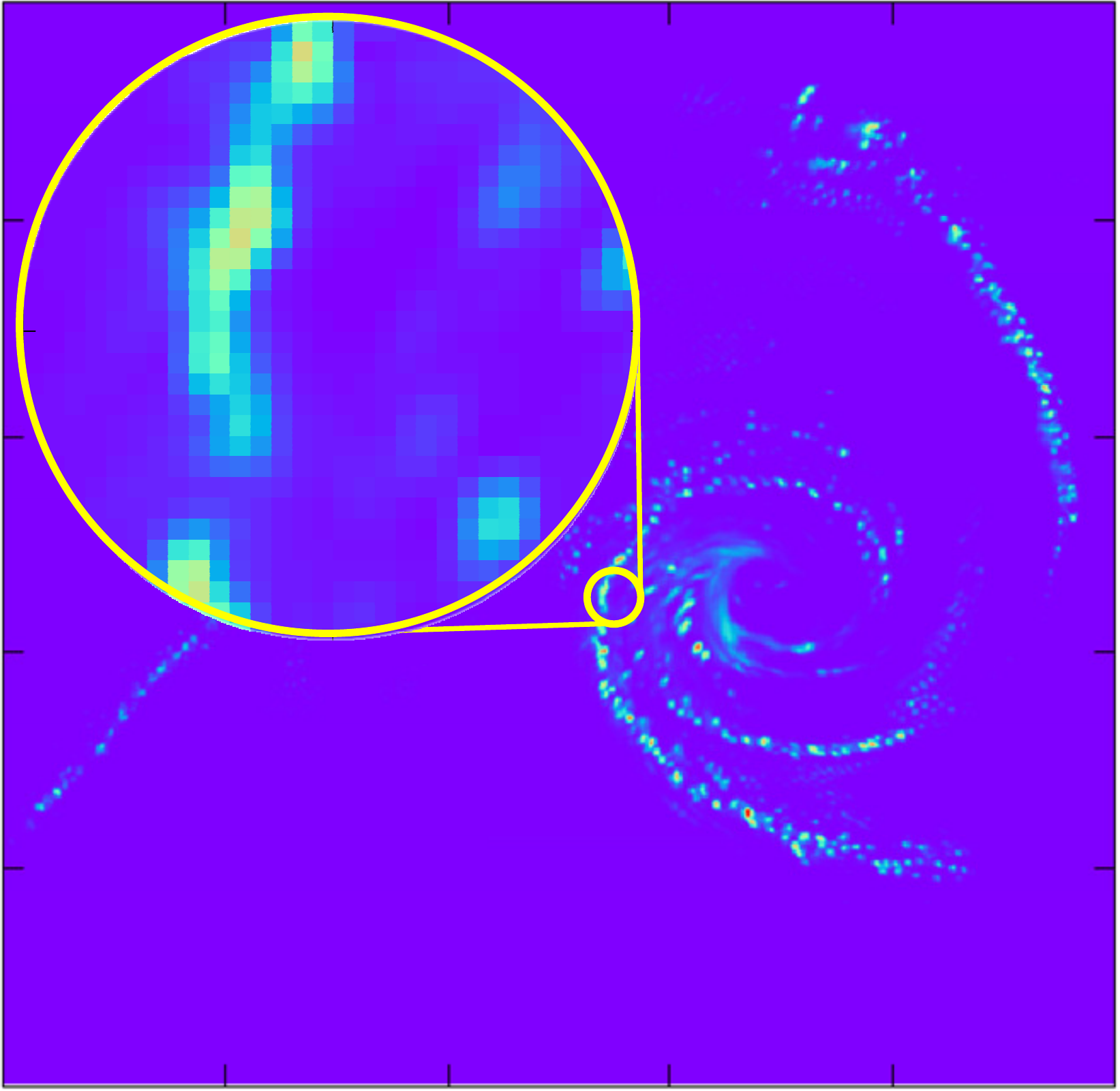}}

    \caption{The visualization and numerical evaluation of the accuracy of our C-Allreduce using Hurricane application dataset. Here the PSNR and NRMSE of the sub-figure(b) are 60.04 and 1E-3, respectively.}
    \label{fig-hurricane-quality}
    \vspace{-2mm}
\end{figure}

\begin{figure}[ht]
    \centering
    \subfloat[Allreduce w/o compression]{\includegraphics[width=.4\linewidth]{./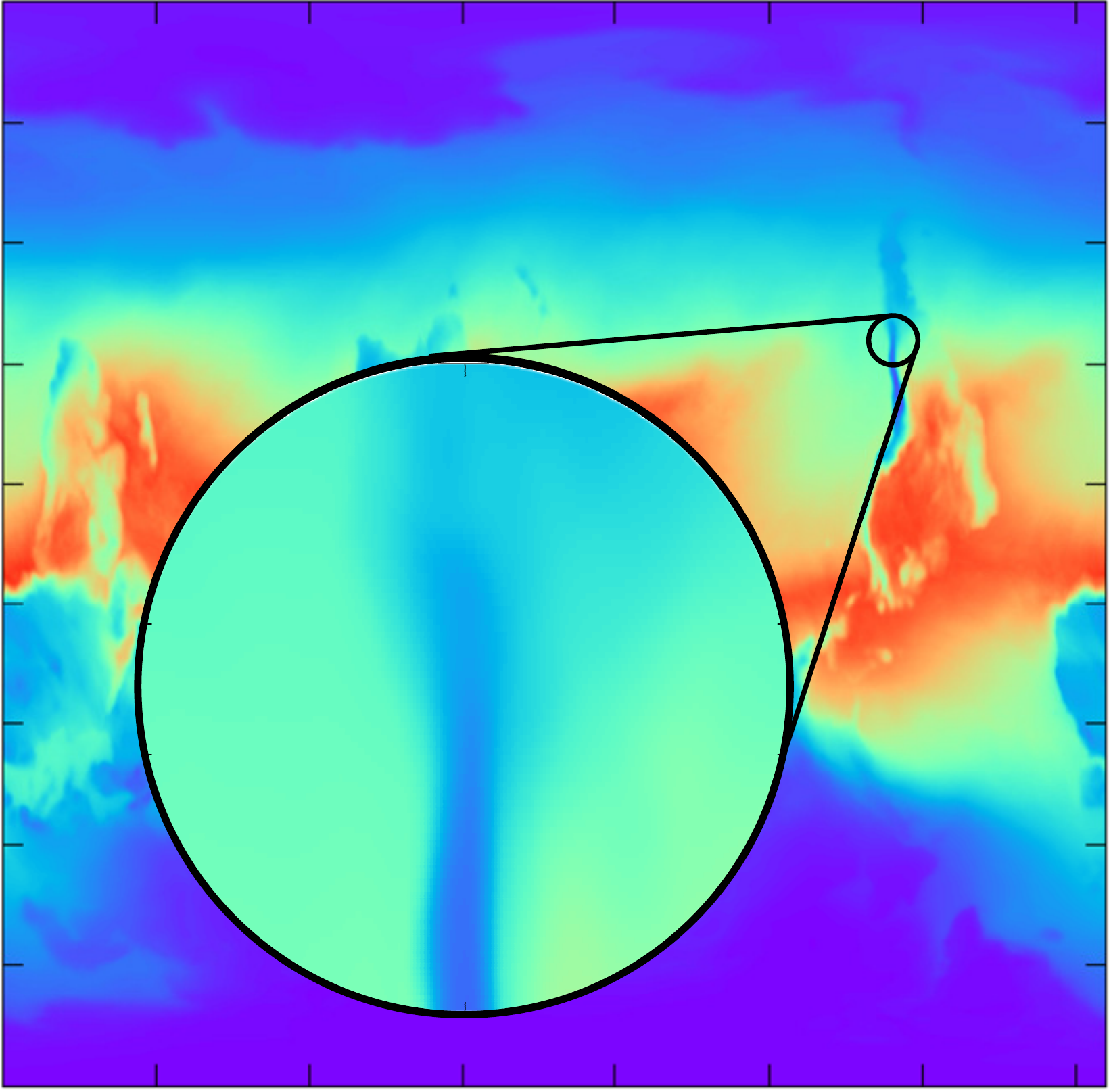}}\hspace{3mm}
    \subfloat[C-Allreduce]{\includegraphics[width=.4\linewidth]{./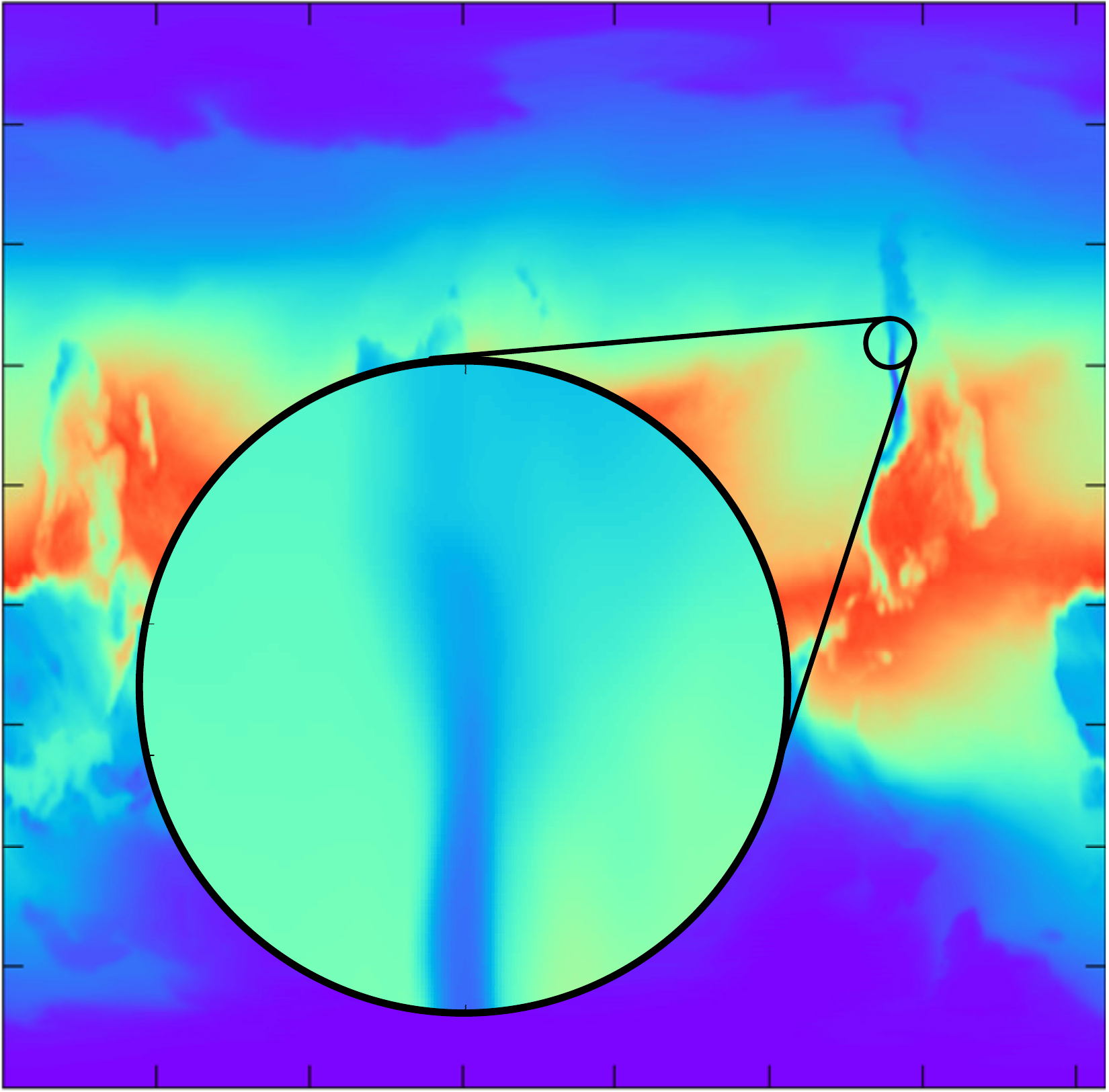}}

    \caption{The visualization and numerical evaluation of the accuracy of our C-Allreduce using CESM-ATM application dataset. Here the PSNR and NRMSE of the sub-figure(b) are 59.19 and 1E-3, respectively.}
    \label{fig-CESM-quality}
    \vspace{-2mm}
\end{figure}

\subsection{Generalizability Demonstration on Other MPI Collectives}
We have demonstrated the high performance of our C-Allred\linebreak uce, consisting of C-Allgather and C-Reduce-scatter. To showcase the generalizability of our frameworks and optimizations, we also present C-Bcast and C-Scatter, which utilize the ubiquitous binomial tree algorithm adopted by MPICH. We conduct experiments on the RTM dataset ranging from 28MB to 678MB using 16 Broadwell nodes. In Figure \ref{fig-portability}, we present the speedups of our C-Scatter and C-Bcast, normalized against the original MPI\_Scatter and MPI\_Bcast (Baseline), respectively. We also compare our C-Scatter and C-Bcast with the SZx-integrated CPR-P2P baselines (SZx). Our C-Scatter is 1.8$\times$ faster than Baseline, while our C-Bcast has up to 2.7$\times$ speedup compared to Baseline. These performance improvements originated from the decreased data transfer volume and diminished compression overheads from our proposed frameworks. The speedups are even more significant compared to our C-Allreduce because collective data movement benefits more from our frameworks than collective computation. However, the SZx-integrated CPR-P2P baselines are much slower than the baseline due to significant compression overheads.

\begin{figure}[ht]
    \centering
    {\includegraphics[width=0.8\linewidth]{./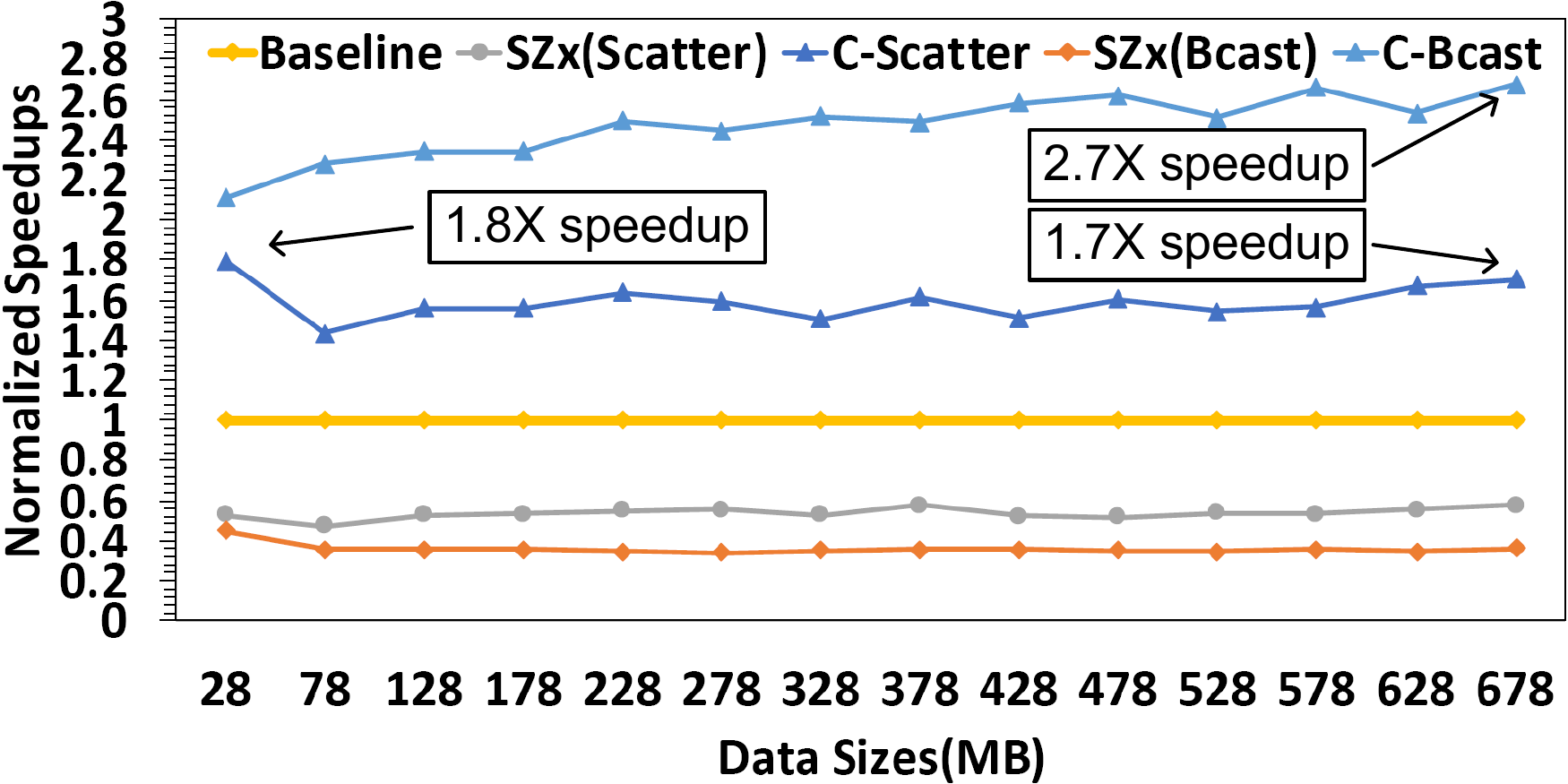}}
    \caption{Generalizability demonstration of our proposed framework and optimizations with MPI\_Scatter and MPI\_Bcast from 28MB to 678MB with a 50MB step.} 
    \label{fig-portability}
\end{figure}

\subsection{Evaluation of Image Stacking Performance and Accuracy}
\label{sec-image-stacking}
Image stacking is a well-known technique used in various scientific domains, including climate simulation and geology, to generate high-quality images by combining different images. For example, MPI is used by researchers to sum the individual images into final images \cite{Gurhem2021Kirchhoff}. In this experiment, we conduct image stacking of the RTM dataset on 16 nodes. As each snapshot has different value ranges, we use the fixed-accuracy (ABS) mode to compress the data so that each snapshot contributes a similar amount of errors rather than letting the snapshots with large value ranges dominate the errors. Three absolute error bounds (i.e., 1E-2, 1E-3, and 1E-4) are selected to demonstrate the flexibility between the accuracy and performance of our C-Allreduce. The same error bounds are used for the baseline SZx and ZFP(ABS). For ZFP(FXR), three fixed rates (4, 8, and 16) are selected and the related compression ratios are 8, 4, and 2. 

\begin{figure}[ht]
    \centering
    {\includegraphics[width=0.7\linewidth]{./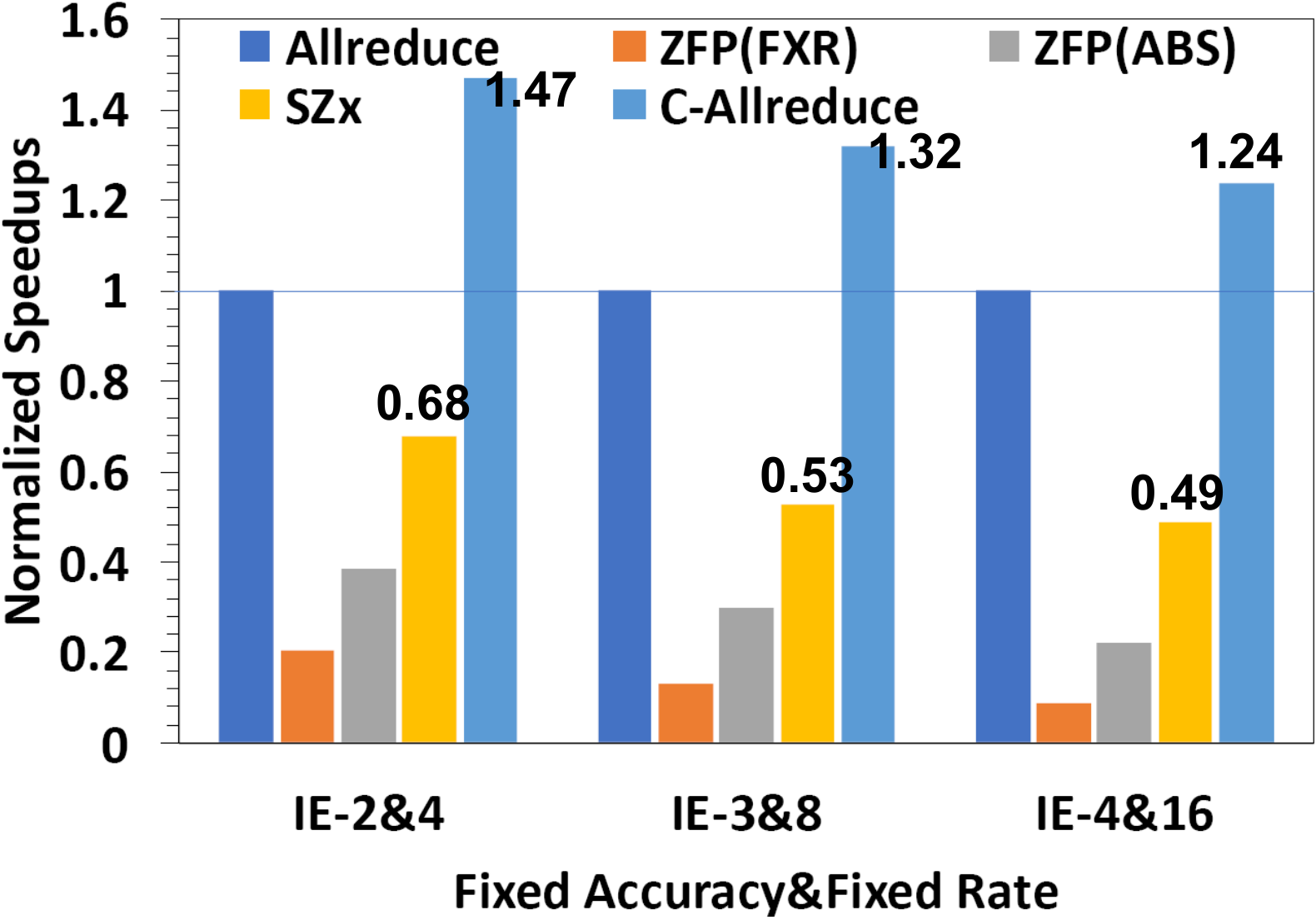}}
    \caption{Compare the image stacking performance of our C-Allreduce and multiple baselines in different error bounds for SZx and ZFP(ABS) and rates for ZFP(FXR).} 
    \label{fig-image-stacking-perf}
\end{figure}

We show the performance results and validate the high quality of the stacked images generated under our compression-integrated MPI collective framework, as shown in Figures \ref{fig-image-stacking-perf} and \ref{fig-image-stacking-quality}, respectively. For our C-Allreduce, we could see that, with the increase of error bounds, the performance drops but the quality of the reconstructed image rises. The highest speedup can be witnessed in the 1E-2 case, where our C-Allreduce has 1.5$\times$ higher performance compared to the original Allreduce. Nevertheless, the three compression-integrated baselines (i.e., SZx, ZFP(ABS), and ZFP(FXR)) all result in performance degradation compared with the original Allreduce, regardless of the absolute error bounds or rates. With an error bound of 1E-4, our C-Allreduce shows an excellent reconstructed image quality, whereas the ZFP(ABS) integrated baseline with the same error bound cannot achieve the same quality. This is because that our proposed frameworks can significantly decrease the error propagation of the reconstructed data, while the CPR-P2P cannot. Due to the page limitation, we do not show the reconstructed image of the SZx integrated baseline as it has even worse quality compared with ZFP(ABS). Besides, the ZFP(FXR) baseline with a rate of 4 has the worst reconstructed quality and the stacked image is completely different from the original one, as the fixed-rate mode cannot ensure a bounded accuracy. 

In addition to the visualization evaluation, we complement our findings with numerical metrics to further demonstrate the exceptional accuracy of our approach. For an error bound of 1E-2, our C-Allreduce method achieves a Peak Signal-to-Noise Ratio (PSNR) \cite{PSNR} of 42.86 and a Normalized Root Mean Square Error (NRMSE) \cite{shcherbakov2013errormetrics} of 7E-3, which illustrate a suboptimal data quality. However, when the error bound is tightened to 1E-3, we observe a notable increase in accuracy, with the PSNR soaring to 57.97 and the NRMSE dropping to 1E-3, which represent a great data quality. Consistent with the trends observed in image quality, the highest accuracy is attained at the 1E-4 error bound. At this threshold, our approach reaches a PSNR of 79.57 and maintains an NRMSE of 1E-4. These numerical metrics further validate the well-controlled accuracy of our method. In a nutshell, our C-Allreduce integrated with \textit{C-Coll} framework can remarkably increase the performance of the original Allreduce and also preserves the quality of the original datasets very well at the error bound of 1E-3 and 1E-4 (see Figure \ref{fig-image-stacking-quality} (e) and (f)).

\begin{figure}[ht]
    \centering
    \vspace{-2mm}
    \subfloat[Original Data]{\includegraphics[width=.35\linewidth]{./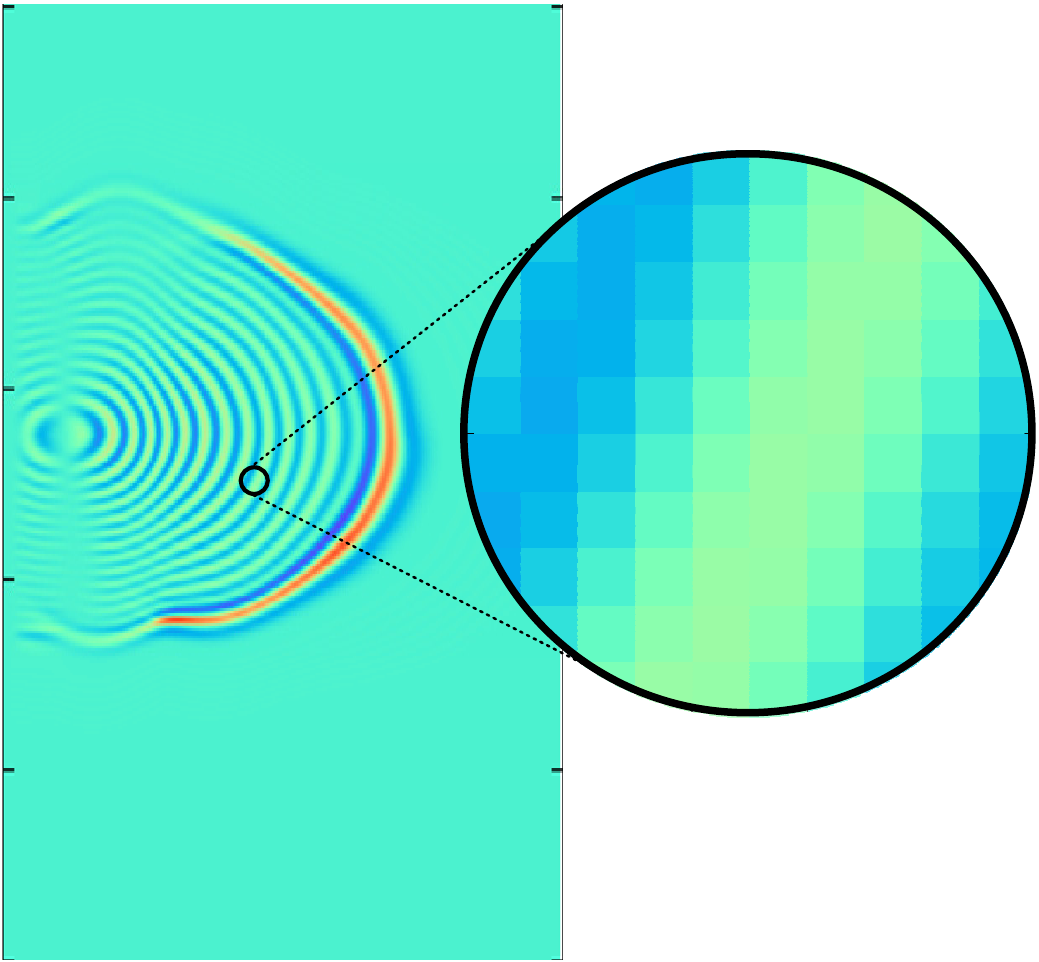}\label{ori-fig}}
    \subfloat[ZFP(FXR)]
    {\includegraphics[width=.35\linewidth]{./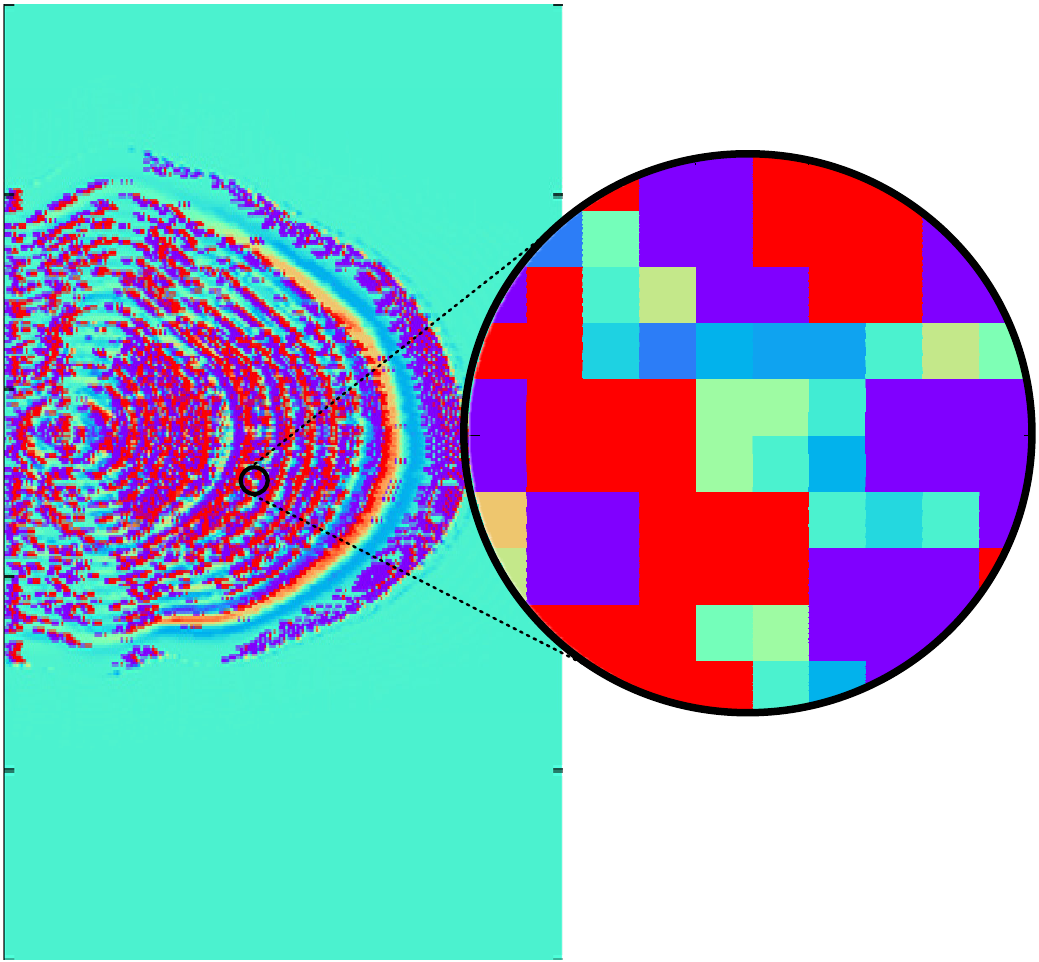}}

    \subfloat[ZFP(ABS)]{\includegraphics[width=.35\linewidth]{./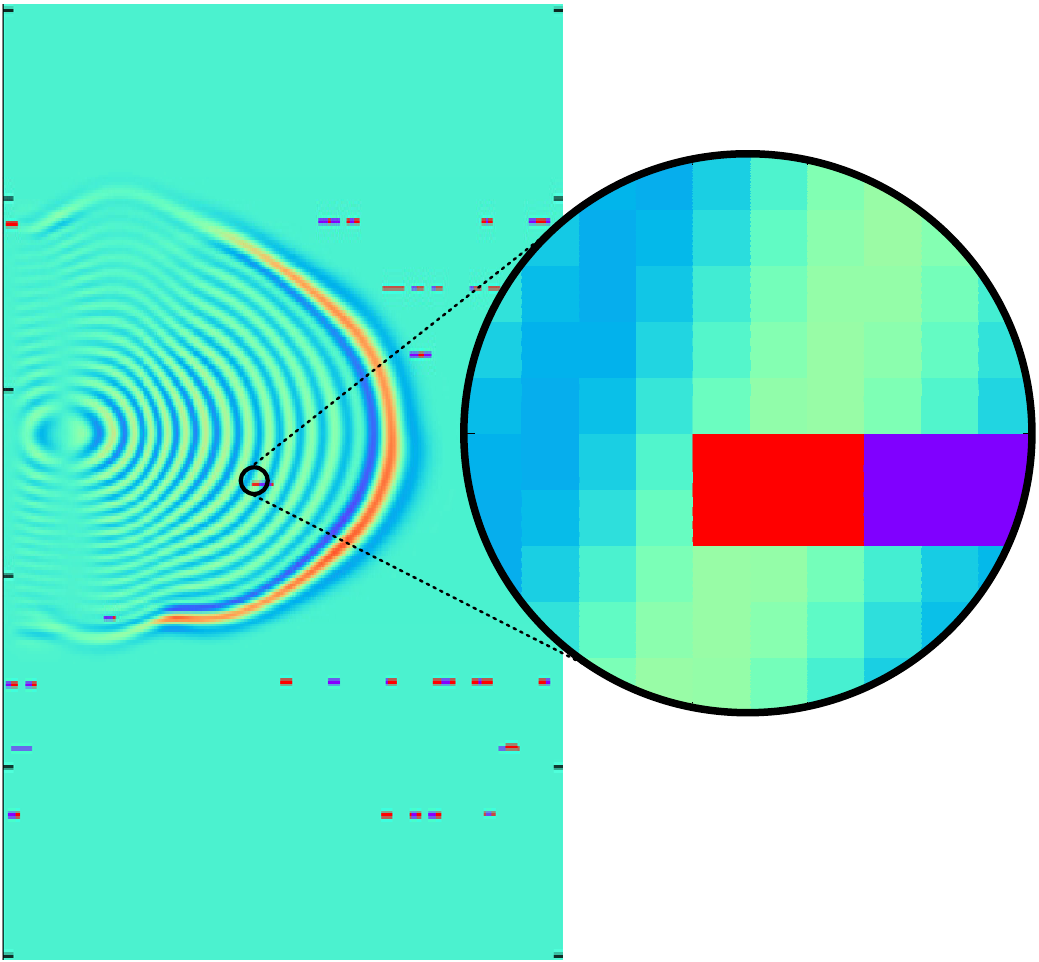}\label{ZFP(ABS)-fig}}
    \subfloat[C-Allreduce(1E-2)]{\includegraphics[width=.35\linewidth]{./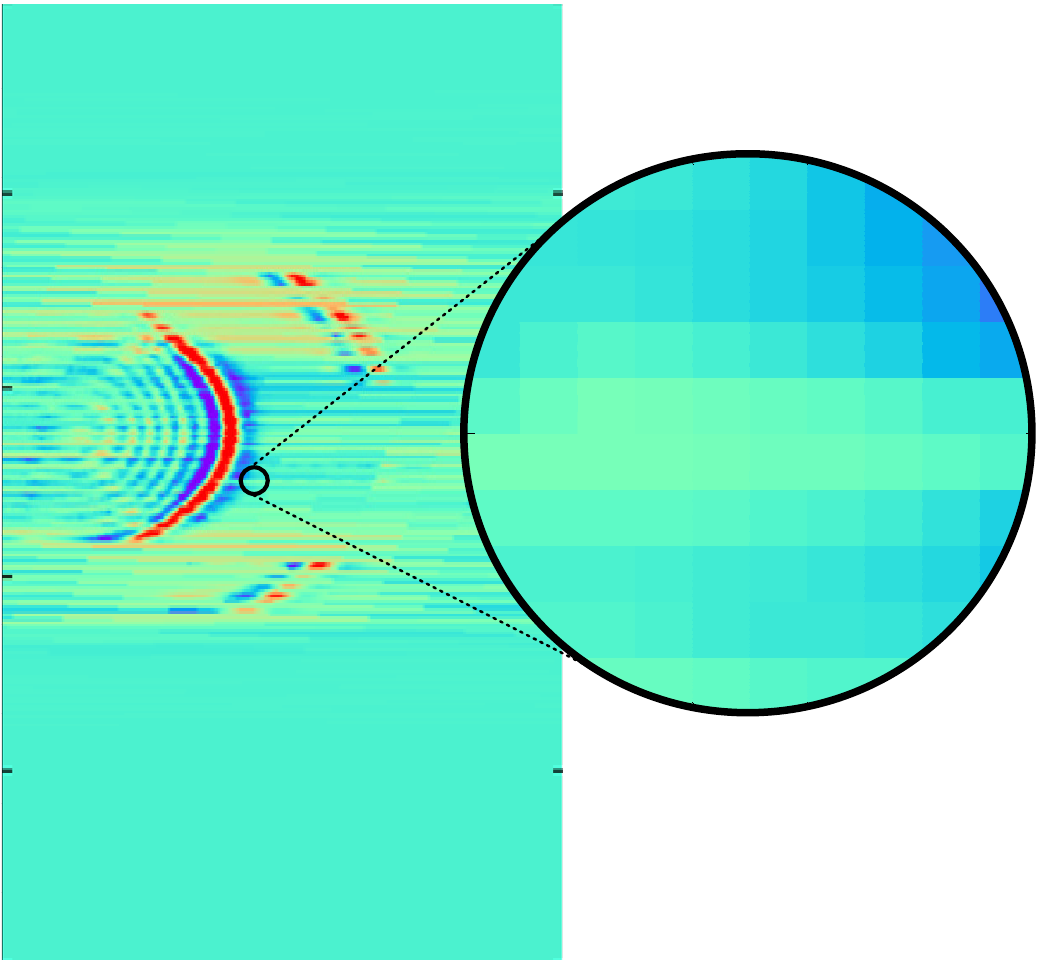}}

    \subfloat[C-Allreduce(1E-3)]{\includegraphics[width=.35\linewidth]{./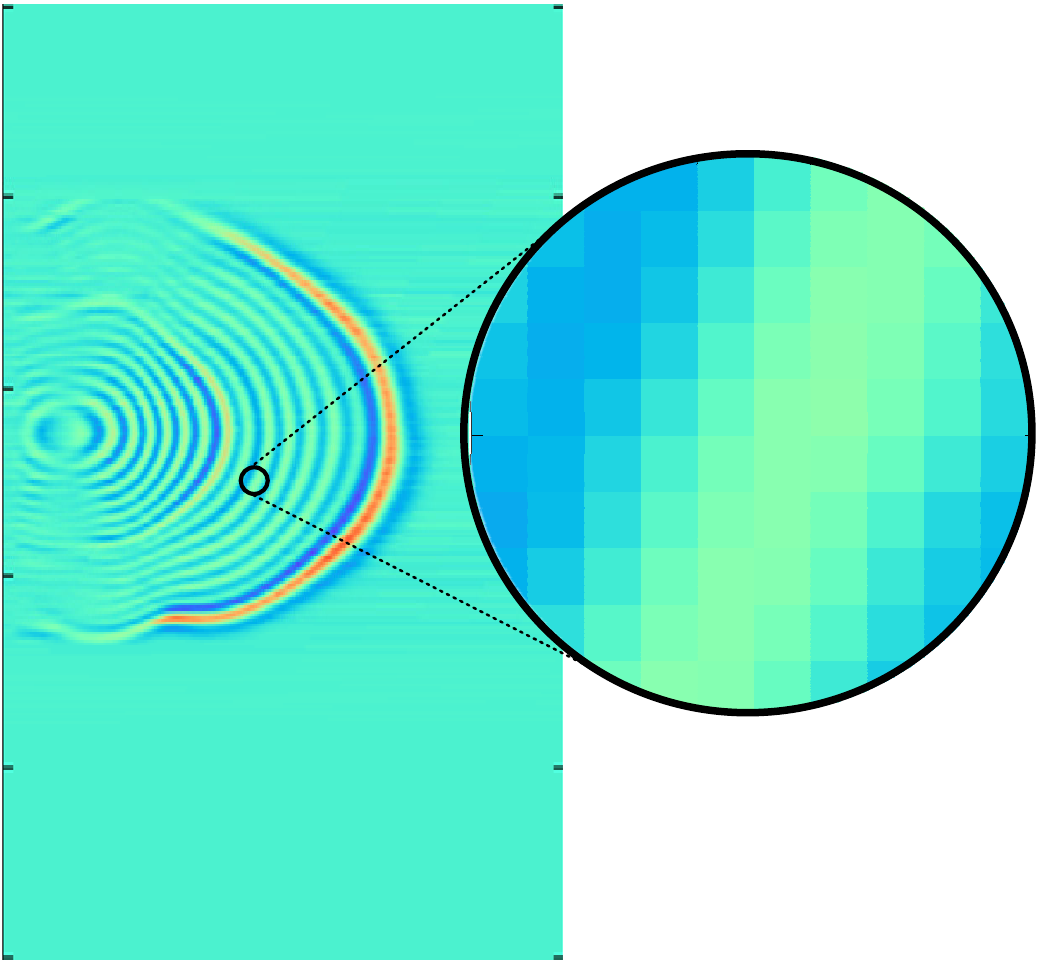}\label{SZX(1E-3)-fig}}
    \subfloat[C-Allreduce(1E-4)]{\includegraphics[width=.35\linewidth]{./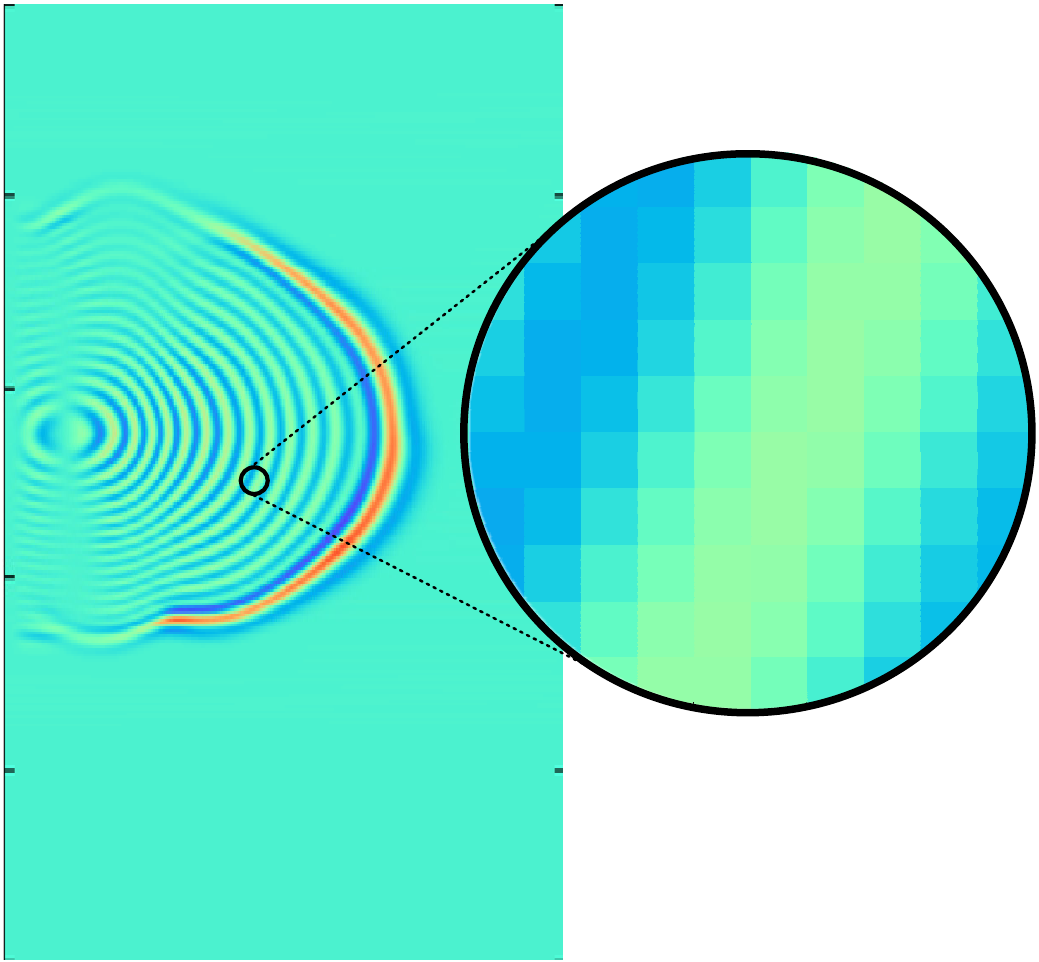}}  
    \caption{Compare the reconstructed image qualities of our C-Allreduce with different ZFP-integrated baselines.} 
    \label{fig-image-stacking-quality}
\end{figure}

\vspace{-2mm}

\section{Conclusion and Future Work}
\label{sec:conclusion}

In this paper, we introduce \textit{C-Coll}, a novel design for lossy-compression-integrated MPI collectives that significantly improves performance with bounded errors. Our two proposed high-performance frameworks for compression-integrated MPI collectives, together with customized pipe-lined SZx, enable us to implement C-Allreduce, which outperforms the original Allreduce by up to \textbf{2.1$\times$} while preserving high data quality. We demonstrate the generalizability of our approaches through C-Scatter and C-Bcast, which outperform the original MPI\_Scatter and MPI\_Bcast by up to \textbf{1.8$\times$} and \textbf{2.7$\times$}, respectively. In summary, our research has addressed the issues of sub-optimal performance, lack of generality, and unbounded errors in lossy-compression-integrated MPI collectives, laying the foundation for future research in this area. Moving forward, we plan to expand our research by implementing more \textit{C-Coll} based collectives and deploying our design on other hardware, such as GPUs and AI accelerators.

\section*{Acknowledgment}
This research was supported by the Exascale Computing Project (ECP), Project Number: 17-SC-20-SC, a collaborative effort of two DOE organizations – the Office of Science and the National Nuclear Security Administration, responsible for the planning and preparation of a capable exascale ecosystem, including software, applications, hardware, advanced system engineering and early testbed platforms, to support the nation’s exascale computing imperative. The material was supported by the U.S. Department of Energy, Office of Science, Advanced Scientific Computing Research (ASCR), under contract DE-AC02-06CH11357, and supported by the National Science Foundation under Grant OAC-2003709, OAC-2104023, and OAC-2311875. The experimental resource for this paper was provided by the
Laboratory Computing Resource Center on the Bebop cluster at Argonne National
Laboratory.


\end{document}